\newtheorem{remark}{Remark}
\newtheorem{theorem}{Theorem}
\newtheorem{lemma}{Lemma}
\DeclareMathOperator*{\argmin}{arg\,min}
\DeclareMathOperator{\vol}{vol}
\DeclareMathOperator{\tr}{tr}
\DeclareMathOperator{\p}{\partial}
\DeclareMathOperator{\rank}{rank}
\DeclareMathOperator{\R}{\mathbb R}
\newcommand{\eps}{\varepsilon}
\definecolor{LightGreen}{rgb}{0.7,1,.7}
\definecolor{LightCyan}{rgb}{0.88,1,1}
\definecolor{Gray}{gray}{0.9}
\definecolor{DarkGreen}{rgb}{0.1,0.7,.1}
\definecolor{RED}{rgb}{1,0,0}
\newcommand\TODO[1]{{\color{red} TODO: #1 }}
\newcommand\algocomment[1]{{\color{gray}  // \emph{ #1}}}
\begin{document}
\title{Foldover-free maps in 50 lines of code}

\author{Vladimir Garanzha}
\authornote{This work is supported by the Ministry of Science and Higher Education of the Russian Federation, project No 075-15-2020-799}
\author{Igor Kaporin}
\authornotemark[1]
\author{Liudmila Kudryavtseva}
\authornotemark[1]

\affiliation{\institution{Dorodnicyn Computing Center FRC CSC RAS}, Moscow, Russia, \institution{Moscow Institute of Physics and Technology}, Moscow, Russia}
\author{Fran{\c{c}}ois Protais}
\authornote{Corresponding author: francois.protais@inria.fr}

\author{Nicolas Ray}
\author{Dmitry Sokolov}
\affiliation{\institution{Université de Lorraine}, \institution{CNRS}, \institution{Inria}, \institution{LORIA}, F-54000 Nancy, France}

\begin{abstract}

Mapping a triangulated surface to 2D space (or a tetrahedral mesh to 3D space) is the most fundamental problem in geometry processing.
In computational physics, untangling plays an important role in mesh generation: it takes a mesh as an input, and moves the vertices to get rid of foldovers.
In fact, mesh untangling can be considered as a special case of mapping where the geometry of the object is to be defined in the map space and the geometric domain is not explicit, supposing that each element is regular.
In this paper, we propose a mapping method inspired by the untangling problem and compare its performance to the state of the art.
The main advantage of our method is that the untangling aims at producing locally injective maps, which is the major challenge of mapping.
In practice, our method produces locally injective maps in very difficult settings, and with less distortion than the previous work, both in 2D and 3D. We demonstrate it on a large reference database as well as on more difficult stress tests.
For a better reproducibility, we publish the code in Python for a basic evaluation, and in C++ for more advanced applications.

\end{abstract}

%
%
\begin{CCSXML}
<ccs2012>
<concept>
<concept_id>10010147.10010371.10010396.10010397</concept_id>
<concept_desc>Computing methodologies~Mesh models</concept_desc>
<concept_significance>500</concept_significance>
</concept>
</ccs2012>
\end{CCSXML}

\ccsdesc[500]{Computing methodologies~Mesh models}

%
%

\keywords{Parameterization, injective mapping, mesh untangling}

\begin{teaserfigure}
\centerline{
    \includegraphics[width=\textwidth]{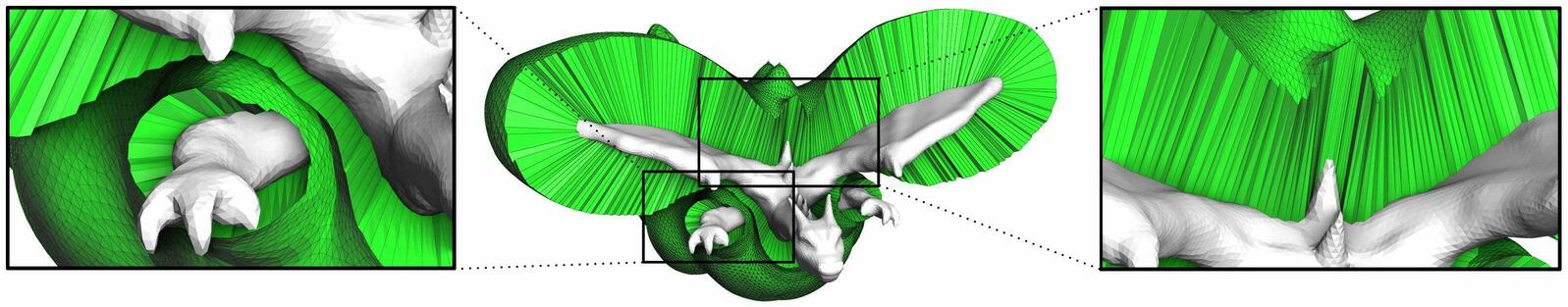}
}
\caption{
Our method of constructing injective maps opens a door for a large variety of applications.
This figure shows an example of a thick prismatic mesh layer (shown in green) built around a triangulated surface, a very challenging problem for highly curved objects.
Thanks to our method, we are able to compute such a layer free of folds and self-intersections.
}
\label{fig:teaser}
\end{teaserfigure}

\maketitle

\section{Introduction}

Most geometric objects are represented by a triangulated surface or a tetrahedral mesh. The mapping problem consists in generating a 2D or 3D map of these objects.
This is a fundamental problem of computer graphics because it is much easier for many applications to work in this map space than to directly manipulate the object itself.
To give few examples, texture mapping stores colors of a surface as images in the map space, remeshing uses global maps in 2D \cite{Bommes2013} and 3D \cite{Gregson2011,Nieser2011}. In addition, mapping algorithms can be used to deform volumes \cite{Li2020}, or generate shells from surfaces \cite{Jiang2020}.

\paragraph*{What is a good map?}
Most often maps are represented by the position of the vertices in the map space, and interpolated linearly on each element (triangle or tetrahedron).
In a perfect world, the map space would keep the geodesic distances of the object. Unfortunately, this is usually impossible due to Gaussian curvature, and application-specific constraints such as constrained position of vertices or overlaps in the map space. Therefore, the objective of the mapping algorithms is to minimize the distortion between geometric and map spaces, opening the door for numerical optimization approaches as detailed in the survey \cite{Hormann2008}.

\paragraph*{What about the invertibility?}
Unfortunately, when high distortion is required to satisfy the constraints, these algorithms may lose the fundamental property of a maps: injectivity.
A solution to preserve it \cite{Floater97} relies on Tutte's theorem \cite{Tutte1963}, however the surface boundary must be mapped to a convex polygon.
Despite this strong limitation, it still remains the reference algorithm to generate injective maps. Lower distortion can be obtained by changing weights of the barycentric coordinates \cite{Eck1995} (as long as they are not negative), and alternative solutions \cite{Campen2016,Shen2019} have been explored to improve robustness to numerical imprecision by modifying the mesh connectivity.

\paragraph*{Local invertibility.}
In many applications, maps are used to access a neighborhood of a point within a coherent local coordinate system.
To this end, global injectivity is not required, and we instead look for local injectivity \cite{Schuller2013,Smith2015}.
Their approach starts from an injective map \cite{Floater97}, and maintains the local injectivity when minimizing the distortion and enforcing the constraints.
This allows them to optimize at the same time the parameterization and the texture packing \cite{jiang2017}, with a possibility to scale to larger meshes \cite{Rabinovich2017}.

\paragraph*{Recover local injectivity.}
Local injectivity can also be recovered for a map with few foldovers present.
For example, in 2D \cite{Lipman2012} and 3D \cite{Aigerman2013}, the map is projected on a class of bounded distortion maps.
The numerical method is however unlikely to succeed for stiff problems.
Recovering local injectivity is also known as mesh untangling.
Originally related to Arbitrary Lagrangian-Eulerian moving mesh approach, the mesh untangling problem considers a simplicial complex with misoriented elements and attempts to flip them back by optimizing the position of the vertices.
There is an abundant literature on mesh untangling \cite{Du2020,knupp2001hexahedral,freitag2000local,escobar2003simultaneous,toulorge2013robust},
however the common opinion is that untangling is a very hard problem and algorithms are not robust enough.
As a manifestation of frustration over this problem \cite{Danczyk2013} investigates a finite element method working directly on tangled (\emph{sic!}) meshes.



\paragraph*{Elastic deformations.}
To recover local injectivity, we propose a method stemming from the computational physics.
It is very important to note that there is rich literature on mesh deformation in the community working on grid generation for scientific computation.
Numerical simulation of hydrodynamic instability of layered structures requires sound mathematical foundations behind moving deforming mesh algorithms.
In the '60s Winslow and Crowley, independently one from another, introduced mesh generation methods based on inverse harmonic maps \cite{crowley1962equipotential,winslow1966numerical}.

Since then, a lot of effort was spent on mesh generation based on elastic deformations~\cite{jacquotte1988mechanical}, but mostly for regular grids.
In 1988, at the time of domination of finite difference mapped grid generation methods, S. Ivanenko introduced the pioneering concept of barrier variational grid generations methods guaranteeing construction of non-degenerate grids~\cite{ivanenko1988construction,charakhch1997variational}.
To generate deformations with bounded global distortion (bounded quasi-isometry constant), Garanzha proposed to minimize an elastic energy for a hyperelastic material with
stiffening suppressing singular deformations~\cite{garanzha2000barrier}. Invertibility theorem for deformation of this material was established in the 3D case as well~\cite{Garanzha2014}.

A solid mathematical ground for these methods was laid by J. Ball who introduced in 1976 his theory of finite elasticity based on the concept of polyconvex distortion energies \cite{ball1976convexity}.
He not only proved Weierstrass-style existence theorem for this class of variational problems, but also formulated a theorem on invertibility of elastic deformations for quite general 3D domains \cite{Ball1981}.
It is important to note that Ball invertibility theorem is proved for Sobolev mappings and can be applied directly for finite element spaces, i.e. to deformation of meshes, as was pointed out in \cite{rumpf1996variational}.




\paragraph{Our contributions}
Inspired by these results on untangling and elastic deformations, we propose a method outperforming recent state of the art on locally injective parameterization \cite{Du2020} in
terms of robustness, quality and supported features. To sum up, our contributions are:
\begin{itemize}
 \item we propose a method that optimizes the map distortion directly, with a parameter for a tradeoff between the angle and area preservation,
 \item we provide a rigourous analysis of the foundations of our numerical resolution scheme,
 \item our method supports mapping with free boundaries,
 \item we observe better robustness for high distortions as well as with respect to a poor quality initialization.
 \item To ease the reproducibility, we publish the code in Python (refer to Listing~1) for a basic evaluation, and a C++ code in the supplemental material for more advanced applications.
\end{itemize}

The rest of the paper is organized as follows: we start with presenting our method in \S~\ref{sec:penatly}, then we evaluate its performance (\S~\ref{sec:results:benchmark} and \S~\ref{sec:results:testing}) as well as its limitations (\S~\ref{sec:results:limitations}).
Then we present theoretical guarantees for our resolution scheme:
in \S~\ref{sec:hessian} we prove that our approximation of Hessian matrix is definite positive,
and finally in \S~\ref{sec:finite-untangling} we prove that our choice of the regularization parameter sequence guarantees that a minimization algorithm\footnote{The minimization algorithm is subject to conditions of Th.~\eqref{th:th}; from our numerical experiments we observe that our minimization algorithm almost always respects the conditions.} can find a mesh free of inverted elements in a finite number of steps.

\section{Penalty method for mesh untangling}
\label{sec:penatly}


In this rather short section we present our method of computing a foldover-free map $\vec u : \Omega\subset \mathbb R^d \rightarrow \mathbb R^d$, i.e. we map the domain $\Omega$ to a parametric domain.
This presentation is unified both for 2D and 3D settings, and by $d$ we denote the number of dimensions; in our notations we use arrows for all vectors of dimension $d$.

The section is organized as follows: in \S~\ref{sec:penalty:intro} we give a primer on the variational formulation of mapping problem in continuous settings,
then we state our problem in \S~\ref{sec:penalty:pb} as a regularization of this variational formulation, and finally we present our numerical resolution scheme in \S~\ref{sec:penalty:resolution}.

\subsection{Variational formulation for grid generation}
\label{sec:penalty:intro}

Let us denote by $\vec u(\vec x)$ a map to a parametric domain: for the flat 2D case we can write as $\vec u (x,y) = (u(x,y), v(x,y))$, and for a 3D map $\vec u (x,y,z) = (u(x,y,z), v(x,y,z), w(x,y,z))$.

Consider the following variational problem:
\begin{equation}
\argmin\limits_{\vec{u}}\int\limits_\Omega\left( f(J) + \lambda g(J) \right)\,dx,
\label{eq:winslow}
\end{equation}
where $J$ is the Jacobian matrix of the mapping $\vec u (\vec x)$, and
\[
f(J) := \left\{ \begin{array}{ll} \frac{\tr J^\top J}{(\det J)^\frac2d}, & \det J > 0 \\
+\infty, & \det J \leq 0 \end{array} \right.
\]

\[
g(J) := \left\{ \begin{array}{ll} \det J + \frac{1}{\det J}, & \det J > 0 \\
+\infty, & \det J \leq 0 \end{array} \right.
\]
Problem~\eqref{eq:winslow} may be subject to some constraints that we do not write explicitly. To give an example, one may pin some points in the map.
In this formulation, functions $f(J)$ and $g(J)$ have concurrent goals, one preserves angles and the other preserves the area, and thus $\lambda$ serves as a trade-off parameter.

As a side note, with $\lambda = 0$ and $d=2$, Prob.~\eqref{eq:winslow} presents a variational formulation of an inverse harmonic map problem.
Namely, if we write down the Euler-Lagrange equations for Prob.~\eqref{eq:winslow} and interchange the dependent and independent variables\footnote{Attention, this step assumes that the solution of Prob.~\eqref{eq:winslow} is a diffeomorphism.}, we obtain the Laplace equation $\Delta \vec{x}(\vec{u}) = \vec{0}$ (not to be confused with omnipresent $\Delta\vec{u}(\vec x)=\vec{0}$!).
For this case, Prob.~\eqref{eq:winslow} is often reffered as to Winslow's functional, however Winslow himself has never formulated the variational problem, working with inverse Laplace equations.
To the best of our knowledge, the first publication of the variational problem is made by Brackbill and Saltzman~\cite{Brackbill1982}.

Prob.~\eqref{eq:winslow} provides a very powerful tool based on the theory of finite hyperelasticity introduced by J. Ball~\cite{ball1976convexity}.
From a computational point of view, starting with a map without foldovers, this tool allows us to optimize the quality of the map.
Note that the energy~\eqref{eq:winslow} is a polyconvex function (refer to App.~\ref{app:positive}--Rem.~\ref{remark:polyconvexity} for a proof) satisfying the ellipticity conditions,
and therefore is very well suited for a numerical optimization provided that we have an initial guess in the admissible domain $\min\limits_\Omega J(\vec u)>0$.

The problem, however, is that while being theoretically sound, this problem statement does not offer any practical way to get rid of foldovers in a map,
because for a map with foldovers the energy is infinite and provides no indications on how to improve the situation,
hence we propose to alter a little the problem statement.

\subsection{Penalty method}
\label{sec:penalty:pb}
We can avoid nonpositive denominators in $f$ and $g$ using a regularization function $\chi$ for a positive value of $\varepsilon$ (Fig.~\ref{fig:chi}):
\begin{equation}
\chi(D, \eps) := \frac{D+\sqrt{\eps^2 + D^2}}{2},
\label{eq:chi}
\end{equation}

\begin{figure}[!t]
\centering
\includegraphics[width=.5\linewidth]{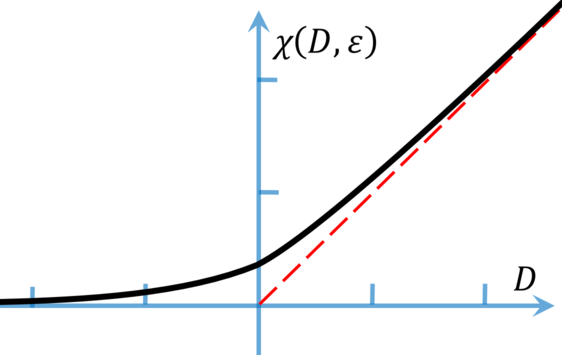}
\caption{Regularization function for the denominator in Eq.~\eqref{eq:fege}. When $\varepsilon$ tends to zero, $\chi(\varepsilon, D)$ tends to $D$ for positive values of $D$, and to $0^+$ for negative values of $D$.}
\label{fig:chi}
\end{figure}

\noindent Then we define a regularized version $f_\eps, g_\eps$ of functions $f$ and $g$:
\begin{equation}
f_\varepsilon(J) := \frac{\tr J^\top J}{\left(\chi(\det J, \varepsilon)\right)^\frac2d},
\qquad
g_\varepsilon(J) := \frac{\det^2 J + 1}{\chi(\det J, \varepsilon)},
\label{eq:fege}
\end{equation}
\noindent
so that Prob~\eqref{eq:winslow} is reformulared as 
\begin{equation}
\lim\limits_{\varepsilon \rightarrow 0^+}\argmin\limits_{\vec{u}}\int\limits_\Omega\left( f_\varepsilon(J) + \lambda g_\varepsilon(J) \right) \,dx
\label{eq:continuous}
\end{equation}

Under certain assumptions\footnote{The fact that the solution of Prob.~\eqref{eq:continuous} is a diffeomorphism is sufficient (but not necessary) for the equivalence.}
solutions of Prob.~\eqref{eq:continuous} are solutions of Prob.~\eqref{eq:winslow}, however, Prob.~\eqref{eq:continuous} does offer a way of getting rid of foldovers if a foldover-free initialization is not available.

In practice, the map $\vec u$ is piecewise affine with the Jacobian matrix $J$ being piecewise constant, and can be represented by the coordinates of the vertices in the parametric domain $\{\vec{u}_i\}_{i=1}^{\#V}$.
Let us denote the vector of all variables as $U := \left(\vec{u}_1^\top \dots \vec{u}_{\#V}^\top \right)^\top$, then our optimization problem can be discretized as follows:

\begin{gather}
\label{eq:discrete}
\lim\limits_{\varepsilon \rightarrow 0^+}\argmin\limits_{U} F(U,\varepsilon), \\
\text{ where }\quad  F(U, \varepsilon) : =\sum\limits_{t=1}^{\#T} \left(f_\varepsilon(J_t) + \lambda g_\varepsilon(J_t)\right)\vol(T_t) \nonumber,
\end{gather}
$\#V$ is the number of vertices, $\#T$ is the number of simplices, $J_t$ is the Jacobian matrix for the simplex $t$ and $\vol(T_t)$ is the volume of the simplex $T_t$ in the original domain.

\subsection{Resolution scheme}
\label{sec:penalty:resolution}

To solve Prob.~\eqref{eq:discrete}, we use an iterative descent method.
We start from an initial guess $U^0$, and we build a sequence of approximations $U^{k+1} := U^{k} + \Delta U^{k}$.
For each iteration we need to carefully choose the regularization parameter $\varepsilon^k$.
Starting from $\varepsilon^0 := 1$, we define the sequence as follows:
\begin{equation}
\label{eq:epsilon}
\eps^{k+1} :=  \left\{
\begin{array}{lcl}
\left(1 - \frac{\sigma^k\sqrt{(D_-^{k+1})^2+(\eps^k)^2}}
{|D_-^{k+1}|+\sqrt{(D_-^{k+1})^2+(\eps^k)^2}}\right)\eps^{k}, 
&\mbox{ if }& D_-^{k+1} < 0, \\ 
&&\\
(1-\sigma^k)\eps^{k}, 
&\mbox{ if }& D_-^{k+1} \ge 0, \\
\end{array} 
\right.
\end{equation}
where $D_-^{k+1}:=\min\limits_{t\in 1\dots\#T} \det J_t^{k+1}$ is the minimum value of the Jacobian determinant over all cells of the mesh at the iteration $k+1$,
and $\sigma^k := \max\left(\frac1{10}, 1 -  \frac{F(U^{k+1}, \eps^k)}{F(U^{k}, \eps^k)}\right)$.
Note that Eq.~\eqref{eq:epsilon} is justified by Th.~\ref{th:th} (\S~\ref{sec:finite-untangling}) on finite untangling sequence; while this formula provides some guarantees,
one may use a simpler heuristic to accelerate the computations, more on this in \S~\ref{sec:results:testing}.

The simplest way to find $\Delta U^{k}$ is to call a quasi-Newtonian solver such as L-BFGS-B~\cite{LBFGS}.
The only thing we need to implement is the computation of the function $F(U^k, \varepsilon^k)$ and its gradient $\nabla F(U^k, \varepsilon^k)$.

Another option is to compute analytically the Hessian matrix instead of estimating it. The problem, however, is that the Hessian matrix $\frac{\p^2 F}{\p U \p U^\top}$ is not positive definite.
In this paper we propose its approximation that ensures the positive definiteness.
The modified Hessian matrix $H^+(U^k,\varepsilon^k)$ of the function $F$ with respect to $U$ at the point $U^k$ is built out of $d\times d$ blocks
$$
H^+_{ij} \approx \frac{\p^2 F}{\p \vec{u}_i \p \vec{u}_j^\top}(U^k,\varepsilon^k).
$$
Here, the matrix $H^+_{ij}$ is placed on the intersection of $i$-th block row and $j$-th block column; the $\approx$ symbol means that we remove all the terms depending on the second derivative of $\chi$
and second derivatives of $\det J$ to keep $H^+$ positive definite.
Refer to Appendix~\ref{app:gradient-hessian} for the formulæ, and to \S~\ref{sec:hessian} for the proof of the positive definiteness of $H+$.

A detailed description of the resolution scheme is given in Alg.~\ref{alg:lbfgs}.
Refer to List.~1 and Fig.~\ref{fig:belinsky-z} for a complete working example of Python implementation and the corresponding input/output generated by the code.
Note that our method is not limited to simplicial meshes only: in this particular example we evaluate the Jacobian matrix for every triangle forming quad corners, what corresponds to the trapezoidal quadrature rule.

\begin{algorithm}[!t]
\KwIn{$U^{0}$; \algocomment{initial guess (vector of size $\#V \times d$)} }
\KwIn{\textit{useQuasiNewton}; \algocomment{boolean to choose the optimization scheme} }
\KwOut{$U$; \algocomment{final locally injective map (vector of size $\#V \times d$)} }
$k\leftarrow 0$\;
\Repeat{
\footnotesize
$\min\limits_{t\in 1\dots\#T} \det J_t^{k}>0$ \textbf{~and~} $F(U^{k}, \varepsilon^{k})>(1-10^{-3})\, F(U^{k-1}, \varepsilon^{k-1})$ 
}{
    compute $\varepsilon^k$ ; \algocomment{regularization parameter, Eq.~\eqref{eq:epsilon} [variant: Eq.~\eqref{eq:epsilon:heuristic}]}\\
    \eIf{useQuasiNewton}{
    $U^{k+1} \leftarrow \text{ L-BFGS-B }(U^k, \varepsilon^k)$; \algocomment{inner L-BFGS-B loop}\\
    }{
        compute a modified Hessian matrix $H^+(U^k,\varepsilon^k)$; 
        $\Delta U^k \leftarrow (H^+)^{-1} \ \nabla F(U^k, \varepsilon^k)$; \algocomment{conjugate gradients}\\
        $U^{k+1} \leftarrow \argmin\limits_{\tau}F(U^k + \tau \, \Delta U^{k}, \varepsilon^k)$; \algocomment{line search}\\
    }
    $k \leftarrow k+1$\;
}
$U\leftarrow U^k$\;
\caption{Computation of a locally injective map}
\label{alg:lbfgs}
\end{algorithm}

\begin{figure}[!t]
\centering
\includegraphics[width=.48\linewidth]{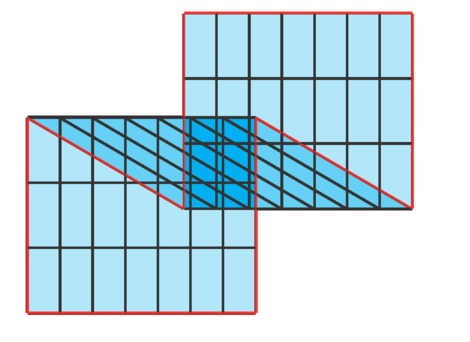}
\includegraphics[width=.48\linewidth]{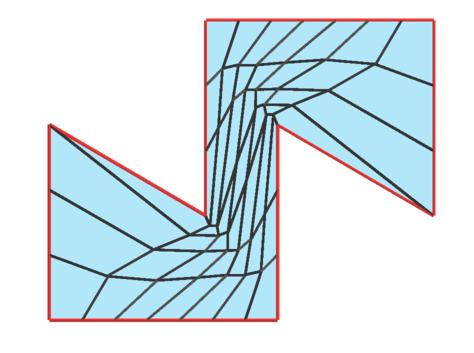}
\caption{
The input mesh with foldovers and the untangling produced by the Listings~1 and~2.
\textbf{Left:} a quad 2D mesh to untangle. The boundary (in red) is locked, and the black mesh is free to move. \textbf{Right:} fold-free result.
}
\label{fig:belinsky-z}
\end{figure}

\begin{figure}[tb]
\centering
\includegraphics[width=.6\linewidth]{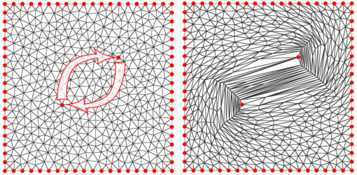}
\caption{
Injective mapping sanity check: make two points inside a square switch places. Left: the input problem, all locked points are shown in red. Right: foldover-free result obtained with our method.
}
\label{fig:sanity-check}
\end{figure}

\begin{figure*}[!p]
\centering
\begin{tikzpicture}
\draw (0, 0) node[inner sep=0] {\includegraphics[width=\linewidth]{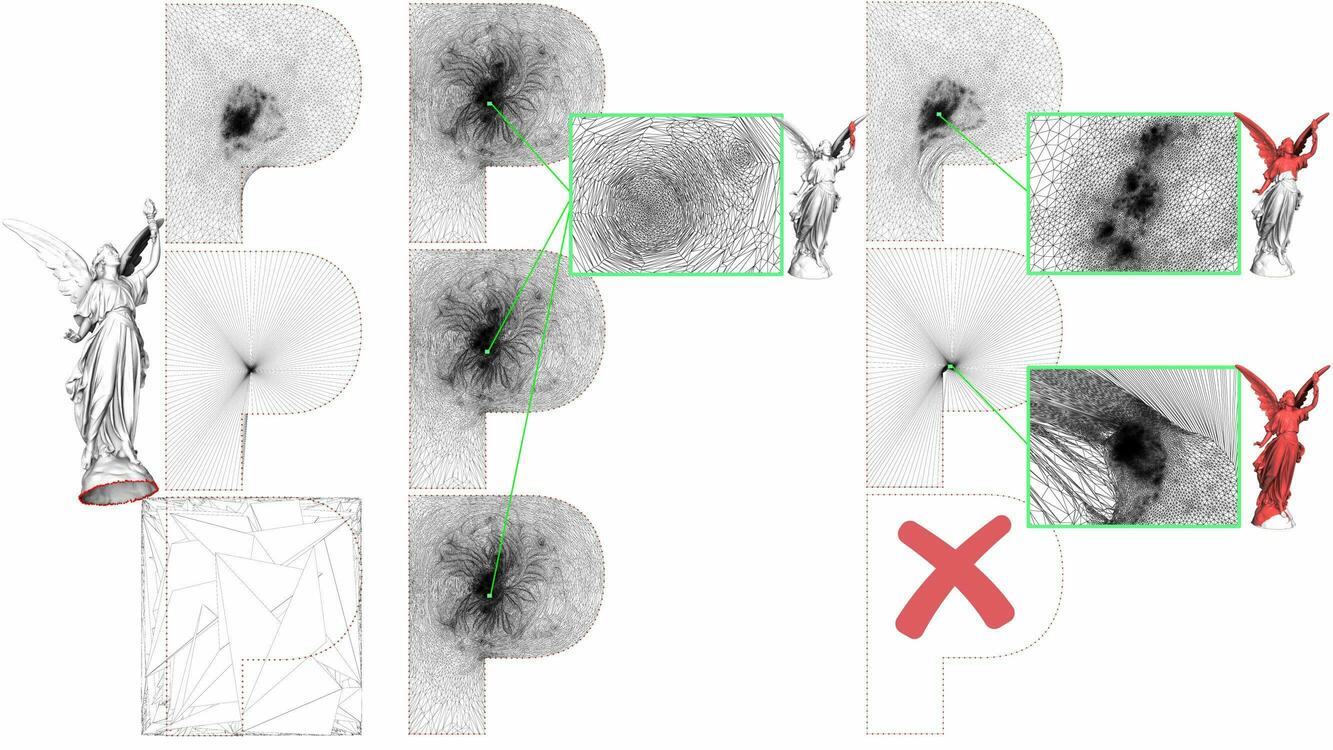}};
\draw (-5.5, 5.3) node {\LARGE Initialization};
\draw (-0.7, 5.3) node {\LARGE Our result};
\draw (5.7, 5.3) node {\LARGE \cite{Du2020}};
\draw (0.85, 1) node {min det $J\approx 4\cdot 10^{-4}$}; 
\draw (0.85, 0.5) node {max $\frac{\sigma_1(J)}{\sigma_2(J)}\approx 69$};
\draw (0.85, 4.5) node {min det $J\approx 4\cdot 10^{-4}$}; 
\draw (0.85, 4) node {max $\frac{\sigma_1(J)}{\sigma_2(J)}\approx 72$};
\draw (0.85, -2) node {min det $J\approx 4\cdot 10^{-4}$}; 
\draw (0.85, -2.5) node {max $\frac{\sigma_1(J)}{\sigma_2(J)}\approx 70$};
\draw (7.5, 1) node {min det $J\approx 1 \cdot 10^{-20}$}; 
\draw (7.5, 0.5) node {max $\frac{\sigma_1(J)}{\sigma_2(J)}\approx 24104$};
\draw (7.5, 4.5) node {min det $J\approx 9\cdot 10^{-17}$}; 
\draw (7.5, 4) node {max $\frac{\sigma_1(J)}{\sigma_2(J)}\approx 16500$};
\end{tikzpicture}
\caption{
Constrained boundary injective mapping challenge proposed by~\cite{Du2020}:
the ``Lucy'' mesh is mapped to a P-shaped domain by constraining the vertices shown in red.
Left column: three different initializations for the same problem.
Middle column: our method produces the same (up to a numerical precision) result on all three initializations.
Right column: total lifted content method~\cite{Du2020} fails to solve for the randomly initialized interior vertices, and produces very different results on other two initializations.
Three miniature images of ``Lucy'' show in red the portion of the surface visible in the corresponding close-ups.
}
\label{fig:lucy}

\vspace{3mm}

\begin{minipage}{.15\linewidth}
\centering
\includegraphics[width=\linewidth]{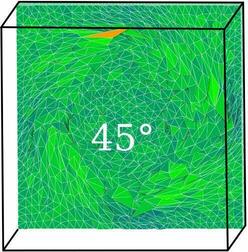}
\includegraphics[width=\linewidth]{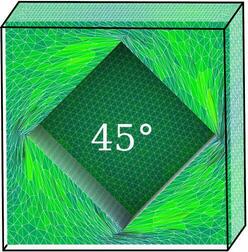}
$
\max\frac{\sigma_1(J)}{\sigma_3(J)} \approx 37
$\\
$
\min\det J \approx 0.03
$\\
\textbf{(a)}
\end{minipage}
\begin{minipage}{.15\linewidth}
\centering
\includegraphics[width=\linewidth]{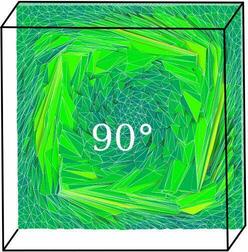}
\includegraphics[width=\linewidth]{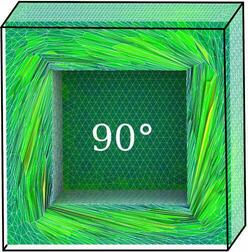}
$
\max\frac{\sigma_1(J)}{\sigma_3(J)} \approx 58
$\\
$
\min\det J \approx 0.02
$\\
\textbf{(b)}
\end{minipage}
\begin{minipage}{.15\linewidth}
\centering
\includegraphics[width=\linewidth]{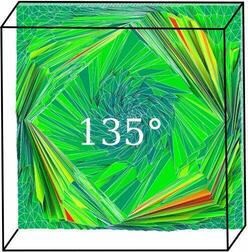}
\includegraphics[width=\linewidth]{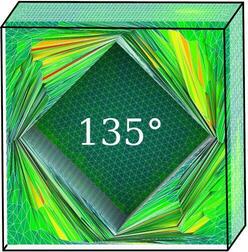}
$
\max\frac{\sigma_1(J)}{\sigma_3(J)} \approx 759
$\\
$
\min\det J\approx 0.003
$\\
\textbf{(c)}
\end{minipage}
\begin{minipage}{.15\linewidth}
\centering
\includegraphics[width=\linewidth]{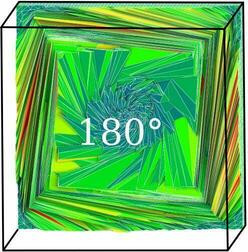}
\includegraphics[width=\linewidth]{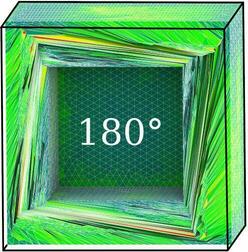}
$
\max\frac{\sigma_1(J)}{\sigma_3(J)} \approx 659
$\\
$
\min\det J\approx 0.002
$\\
\textbf{(d)}
\end{minipage}
\hspace{6mm}
\begin{minipage}{.15\linewidth}
\centering
\includegraphics[width=\linewidth]{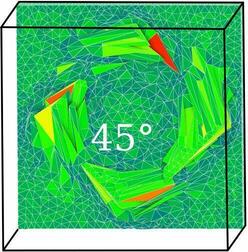}
\includegraphics[width=\linewidth]{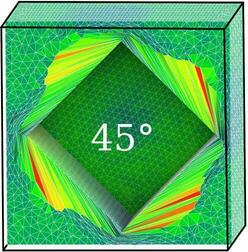}
$
\max\frac{\sigma_1(J)}{\sigma_3(J)} \approx 524
$\\
$
\min\det J\approx 0.002
$\\
\textbf{(e)}
\end{minipage}
\begin{minipage}{.15\linewidth}
\centering
\includegraphics[width=\linewidth]{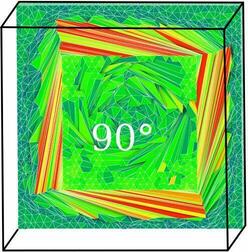}
\includegraphics[width=\linewidth]{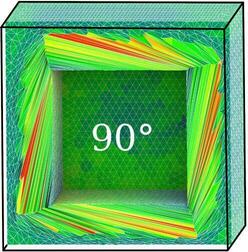}
$
\max\frac{\sigma_1(J)}{\sigma_3(J)} \approx 3391
$\\
$
\min\det J\approx 0.0004
$\\
\textbf{(f)}
\end{minipage}

\caption{Constrained boundary injective mapping stress test. We have generated an isotropic tetrahedral mesh of a cube subtracted from a larger cube,
and we rotate the inner cube's boundary to test the robustness.
Top row and bottom row correspond to two different slices of the same mesh.
Columns \textbf{(a)--(d)}: injective maps produced by our method, columns \textbf{(e)} and \textbf{(f)}: injective maps produced by~\cite{Du2020}.
The method by~\cite{Du2020} fails to generate injective maps for 135° and 180° inner cube rotations.
The colormap illustrates the relative volume scaling: green for $\det J\approx 1$, red for inflation, blue for compression.
}
\label{fig:stress1}
\end{figure*}

\section{Results and discussion}
\label{sec:results}

In this section we provide an experimental evaluation of the method.
In the field of computer graphics, any claim about map injectivity always faces a simple sanity check (Fig.~\ref{fig:sanity-check}):
take a square and make two points inside switch places. Our method successfully avoids the desk-reject, so
we start this section (\S~\ref{sec:results:benchmark}) by testing our method on the benchmark~\cite{Du2020},
then we continue with further tests we have found relevant (\S~\ref{sec:results:testing}), and finally we discuss the limitations of the approach in \S~\ref{sec:results:limitations}.

\subsection{Benchmark database}
\label{sec:results:benchmark}
Along with their paper, Du et al. have published a valuable benchmark database.
It contains a huge number of 2D and 3D constrained boundary injective mapping challenges.
For 2D challenges, the benchmark contains 3D triangulated surfaces to flatten, and not flat 2D meshes as we have described in \S~\ref{sec:penalty:intro}.
Nevertheless, our method can handle it directly because the mapping is still $\mathbb R^2 \rightarrow \mathbb R^2$ on each triangle.
To the best of our knowledge, Total Lifted Content (TLC)~\cite{Du2020} and our method are the only ones passing the benchmark without any fail.

A representative example from the database is given in top row of Fig.~\ref{fig:lucy}.
The challenge is to map the ``Lucy'' mesh statuette from the Stanford Computer Graphics Laboratory to a P-shaped domain.
This mesh has a topology of a disk, and its boundary vertices are uniformly spaced on the P-shape boundary.
As an initialization to the problem, Du et al. have computed the corresponding (flat) minimal surface that obviously contains a foldover (Fig.~\ref{fig:lucy}--top left).
Then the problem boils down to a mesh untangling with locked boundary.

\paragraph*{Mapping quality measure}
How to measure quality of a map? Well, it depends on the goal. An identity is an unreachable ideal; traditional competing goals are (as much as possible) angle preserving and area preserving maps.
Thus, we can measure the extreme values of the failure of a map to be conformal or authalic.
Our maps being piecewise affine, the Jacobian matrix $J$ is constant per element.
Let us define the largest singular value of $J$ as $\sigma_1(J)$, and the smallest singular value as $\sigma_d(J)$;
then the quality of a mapping can be reduced to extreme values of the stretch ($\max\frac{\sigma_1(J)}{\sigma_d(J)}$) and the scaling ($\min \det J$).

For the ``Lucy-to-P'' challenge (Fig.~\ref{fig:lucy}--top row) our map differs from the TLC result by 12 orders of magnitude in terms of minimum scaling, and by two orders of magnitude in terms of maximum stretch.
To visualize this difference in scaling, we have provided the close-ups: Fig.~\ref{fig:lucy}--top middle shows a map of the Lucy's torch, whereas the same level of zoom on the result by Du et al. (Fig.~\ref{fig:lucy}--top right)
contains not only the torch, but also both wings, the head and the right arm!

Note also that the input ``Lucy'' mesh is slightly anisotropic; our method allows us to prescribe the element target shape, so the dress pleats are clearly visible in our mapping.

\begin{figure}[!t]
\centering
\textbf{2D dataset (10743 challenges)} \\
\includegraphics[width=.48\linewidth]{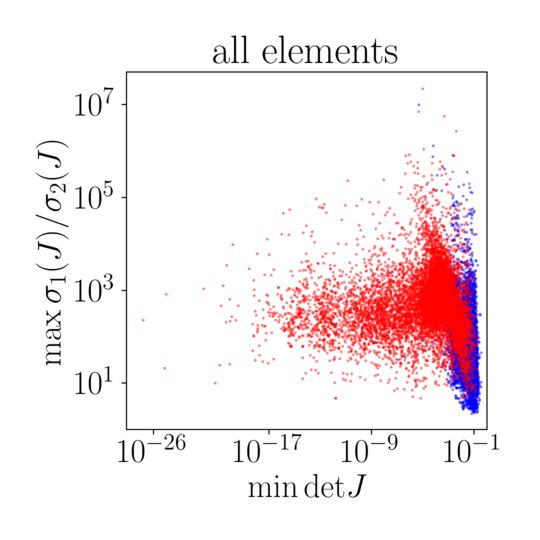}
\includegraphics[width=.48\linewidth]{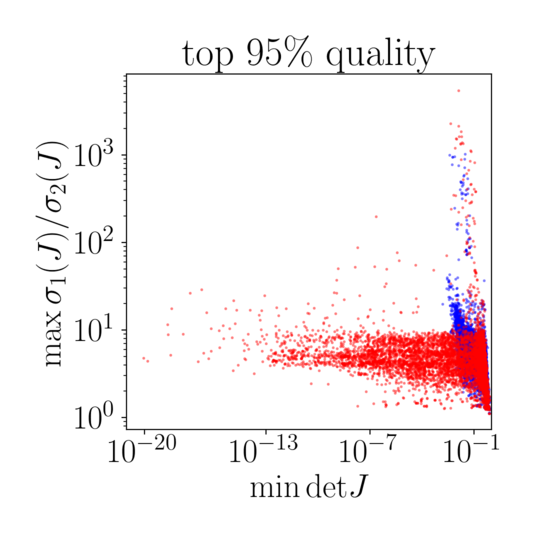}\\
\textbf{3D dataset (904 challenges)} \\
\includegraphics[width=.48\linewidth]{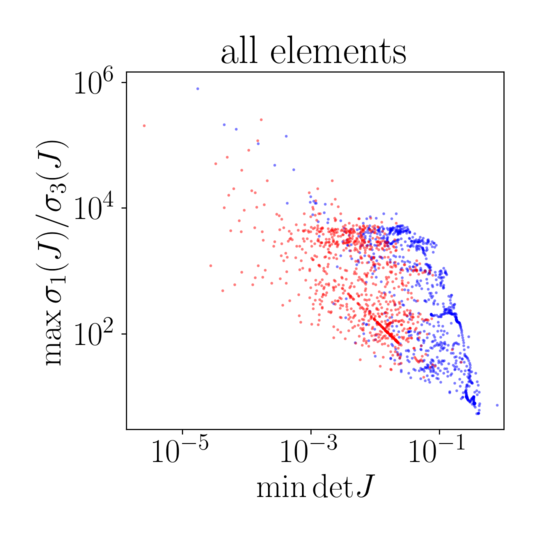}
\includegraphics[width=.48\linewidth]{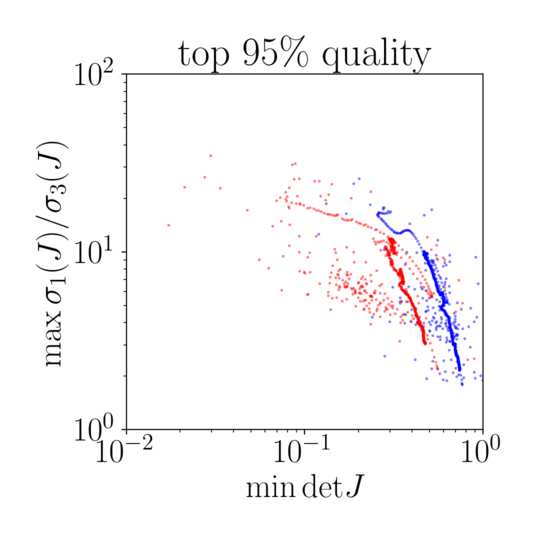}
\caption{Quality plot of the resulting locally injective maps for every challenge from the database provided by~\cite{Du2020}.
Our results are shown in blue, whereas the results by Du et al. are shown in red.
Each dot corresponds to a quality of the corresponding map reduced to two numbers: the maximum stretch and the mininum scale.
\textbf{Top row:} mapping quality on the 2D dataset. \textbf{Bottom row:} mapping quality on the 3D dataset.
Left column shows the absolute maximum stretch and absolute minimum scale, whereas the right column shows the maximum stretch and minimum scale for the top 95\% of measurements.
}
\label{fig:scatter}
\end{figure}

\paragraph*{Benchmark database}
Our method successfully passes all challenges from the benchmark~\cite{Du2020}.
The benchmark consists of 10743 meshes to untangle in 2D and 904 meshes in 3D under locked boundary constraints.
In Fig.~\ref{fig:scatter} we provide quality plots of the resulting locally injective maps.
These are $\log-\log$ scatter plots: each dot corresponds to a quality of the corresponding map reduced to two numbers: the maximum stretch ($\max\frac{\sigma_1(J)}{\sigma_d(J)}$) and the minimum scaling ($\min \det J$).
Left column of Fig.~\ref{fig:scatter} shows the worst quality measurements for every 2D problem (top) as well as for every 3D challenge (bottom) of the dataset.
Our results are shown in blue, whereas TLC results are shown in red.
To illustrate the distribution of the elements' quality, for each injective map we have removed 5\% of worst measurements:
the right column of Fig.~\ref{fig:scatter} shows the maximum stretch and the minimum scaling for the top 95\% of measurements.

Note the dot arrangements forming lines in the plot: these dots correspond to the few sequences of deformation present in the database.

\begin{figure}[!t]
\centering
\begin{minipage}{.48\linewidth}
\centering
\includegraphics[width=\linewidth]{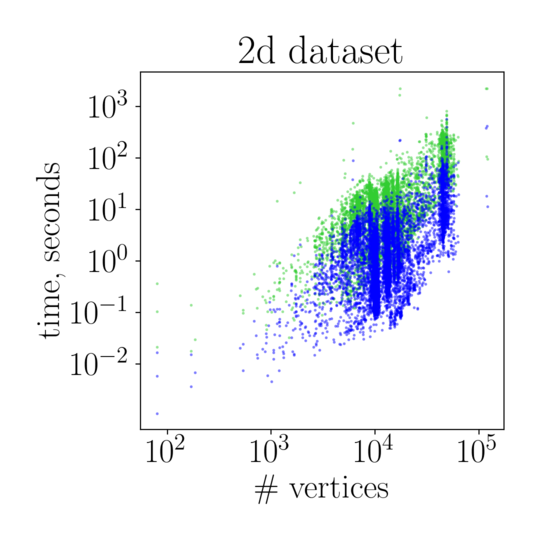}
\end{minipage}
\begin{minipage}{.48\linewidth}
\centering
\includegraphics[width=\linewidth]{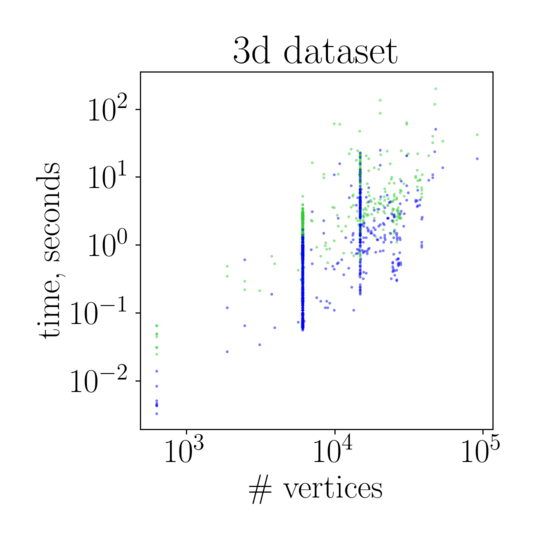}
\end{minipage}
\caption{Performance of our method tested on the benchmark \cite{Du2020}.
Each dot corresponds to a challenge from the database (10743 in 2D and 904 in 3D).
Blue dots show the running times obtained using Eq.~\eqref{eq:epsilon:heuristic}, green dots correspond to Eq.~\eqref{eq:epsilon}.
}
\label{fig:timings}
\end{figure}

\paragraph*{Timings}
Fig.~\ref{fig:timings} provides a $\log-\log$ scatter plot of our running time vs mesh size for all the challenges from the database \cite{Du2020}: for each run, the time varies from a fraction of a second to several minutes for the largest meshes.
These times were obtained with a 12 cores i7-6800K CPU @ 3.40 GHz.
As in Fig.~\ref{fig:timings}, the vertical lines in the 3D dataset plot correspond to the sequences of deformation in the benchmark.

There are two scatter plots superposed, both represent the same resolution scheme with an exception corresponding to the way we compute $\eps^k$ (Alg.~\ref{alg:lbfgs}--line 3).
The green scatter plot corresponds to a conservative update rule (Eq.~\eqref{eq:epsilon}) offering guarantees on untangling in a finite number of steps (refer to Th.~\ref{th:th}), whereas
the blue scatter plot is obtained using the following update rule:
\begin{equation}
\varepsilon^k := \sqrt{10^{-12} + 4 \cdot 10^{-2}\cdot \left[ \min(0, \min\limits_{t\in 1\dots\#T} \det J_t^k)\right]^2 },
\label{eq:epsilon:heuristic}
\end{equation}
This formula was chosen empirically, and it performs very well in the vast majority of situations. In particular, it allows for all the database~\cite{Du2020} to pass the injectivity test.

\subsection{Further testing}
\label{sec:results:testing}

\paragraph*{Sensitivity to initialization}
Our next test is the sensitivity to the initialization.
We have generated two other initializations for the ``Lucy-to-P'' challenge: the one with all interior vertices collapsed onto a single point (Fig.~\ref{fig:lucy}--middle row),
and with the interior vertices being randomly placed withing a bounding square (Fig.~\ref{fig:lucy}--bottom row).

Our method produces virtually the same result on all three initializations, whereas TLC generates very different results for the first two, and fails for the third one.
It is interesting to note that TLC is heavily depending on the initialization, it alters very little the input geometry.

\paragraph*{Large deformation stress test}
For our next test we have generated an isotropic tetrahedral mesh of a cube with a cavity, and we rotated the inner boundary to test the robustness of our method to large deformations.
Figure~\ref{fig:stress1} shows the results.
Our L-BFGS-based optimization scheme succeeds up to the rotation of 135°, and we had to switch to the Newton method to reach the 180° rotation.
TLC method had succeded on 45° and 90°, and failed for the 135° and 180°.
Note that as in the previous test, even when the untangling succeeds, TLC alters very little the input map, thus producing heavily stretched tetrahedra,
whereas our method evenly dissipates the stress over all the domain.

\begin{figure}[!t]
\centering
\begin{minipage}{.58\linewidth}
\centering
\includegraphics[width=.8\linewidth]{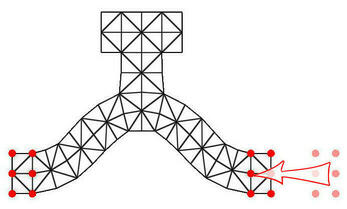}\\
\textbf{(a)}
\end{minipage}
\begin{minipage}{.38\linewidth}
\centering
\includegraphics[width=\linewidth]{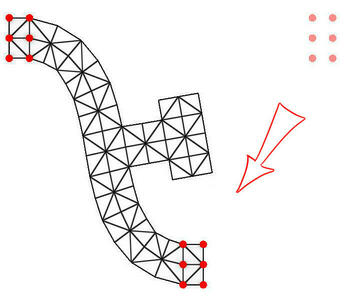}
\textbf{(b)}
\end{minipage}
\caption{Free boundary injective mapping.
The vertices shown in red are constrained, all other vertices are free to move.
\textbf{(a):} a compression test, \textbf{(b):} a bend test.
Refer to Fig~\ref{fig:lambda}--a for the rest shape.
}
\label{fig:bend-compress}
\end{figure}

\paragraph*{Free boundary injective mapping}
To the best of our knowledge, our method is the only one able to produce inversion-free maps with free boundary.
Since TLC tries to minimize the overall volume, relaxing the boundary constraints results in degenerate maps.

Fig.~\ref{fig:bend-compress} shows two maps obtained with our method: a 2D shape being compressed and the same shape being bent. The boundary is free to move, we lock the vertices shown in red.
Refer to Fig.~\ref{fig:lambda}-a for the rest shape.
The shape behaves exactly as a human would expect it: upon compression the shape chooses one of the two possible results (Fig.~\ref{fig:bend-compress}--a),
and successfully passes the bend test (Fig.~\ref{fig:bend-compress}--b), note the geometrical details that are naturally rotated.

\begin{figure}[!t]
\centering
\begin{minipage}{.48\linewidth}
\centering
\includegraphics[width=.75\linewidth]{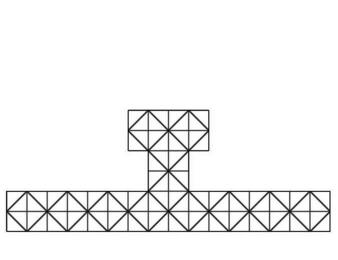}\\
\textbf{(a)}
\end{minipage}
\begin{minipage}{.48\linewidth}
\centering
\includegraphics[width=\linewidth]{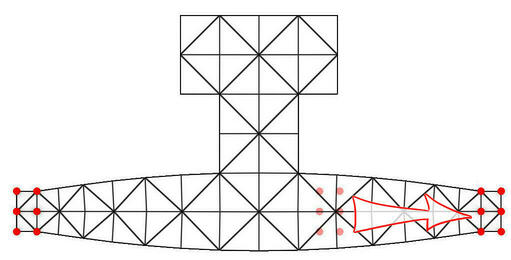}
\textbf{(b)}
\end{minipage}
\begin{minipage}{.48\linewidth}
\centering
\includegraphics[width=\linewidth]{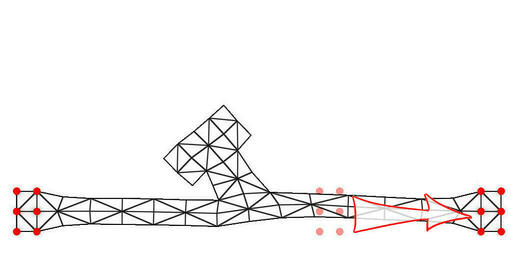}
\textbf{(c)}
\end{minipage}
\begin{minipage}{.48\linewidth}
\centering
\includegraphics[width=\linewidth]{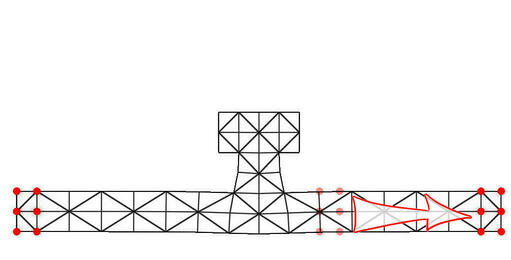}
\textbf{(d)}
\end{minipage}
\caption{Free boundary injective mapping: influence of the parameter $\lambda$ in Prob.~\eqref{eq:discrete}.
The vertices shown in red are constrained, all other vertices are free to move.
\textbf{(a):} the rest shape; \textbf{(b):} a stretch preserving the shape of the elements ($\lambda=0$); \textbf{(c):} an area-preserving stretch ($\lambda=10^4$); \textbf{(c):} a trade-off between the shape and the area preservation ($\lambda=1$).}
\label{fig:lambda}
\end{figure}

\paragraph*{Shape-area tradeoff \ $\lambda$}
Our final tests illustrate the influence of the parameter $\lambda$ in Prob.~\eqref{eq:discrete} on the resulting map.
We have computed three free boundary maps of the rest shape (Fig.~\ref{fig:lambda}--a) being stretched.
First we chose $\lambda=0$, that is, only the shape quality term is taken into account in Prob.~\eqref{eq:discrete}.
When we optimize for the angles, the area of the triangles is forced to change, refer to Fig.~\ref{fig:lambda}--b for the resulting map.
Naturally, an area preserving map ($\lambda=10^4$) must deform the elements to satisfy the area constraint (Fig.~\ref{fig:lambda}--c).
Finally, in Fig.~\ref{fig:lambda}--d we show an example with a tradeoff between the area and angles preservation.

\subsection{Limitations}
\label{sec:results:limitations}

While globally performing very well in practice, our method still presents some limitations.
We have two main sources of limitations: theoretical limitations as well as very practical ones related to numerical stability of our resolution scheme.

\paragraph*{Overlaps}
First of all, an inversion-free map does not imply global injectivity.
Fig.~\ref{fig:overlap}--a provides an example of an inversion-free map with two cases of non-injectivity when optimizing for a map with free boundaries:
the map can present global overlaps as well as the boundary can ``wind up'' around boundary vertices, i.e. the total angle of triangles incident to a vertex can be superior to $2\pi$.
Moreover, while being less frequent, similar situations may occur on interior vertices.
Typically this situation happens near constraints causing a local compression in the shape.

Let us illustrate this behavior on a very simplistic mesh consisting of a single fan of 12 triangles.
All vertices are free to move, the target shape is set to be the unit equilateral triangle for all elements.
For this problem Fig.~\ref{fig:covering}--a shows a local minimum, and the Fig.~\ref{fig:covering}--c shows the global minimum respecting perfectly the prescribed total angle of $4\pi$ around the center vertex.
Both are inversion-free maps, but only the map in Fig.~\ref{fig:covering}--a is a globally injective one.
Depending on the initialization and the resolution scheme chosen, we can converge to either minimum.
Note, however, that the center vertex has the winding number 1 in one map and 2 in the other, and thus we can not deform continuously one to the other without inverting some elements.
Note also that the configurations like in the Fig.~\ref{fig:covering}--b present inverted elements and thus can not be generated by our method.

It is possible to avoid all overlaps altogether by embedding our shape to optimize into an outer triangulation, and performing a ``bi-material'' optimization.
In this case, both global overlaps and fold-2-coverings are prohibited by the the fact that the outer material must not have inverted elements (refer to Fig.~\ref{fig:overlap}--b).
The thick prismatic layer in Fig.~\ref{fig:teaser} was generated by a similar procedure: we have generated a very thin layer of triangular prisms around the dragon, and tetrahedralized the exterior bounded by a cube.
After calling the untangling procedure, we have obtained an offset surface with exactly the same mesh connectivity as the original dragon mesh.

While this embedding kind of approach works well for certain applications, for other it may be hard to apply.
We are currently exploring simpler practical ways to fix the winding problem.

\begin{figure}[!t]
\centering
\begin{minipage}{.3\linewidth}
\centering
\vspace{13mm}
\includegraphics[width=\linewidth]{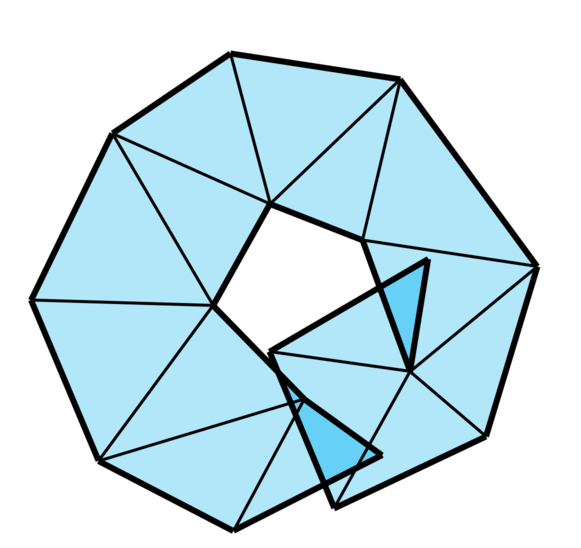}
\textbf{(a)}
\end{minipage}
\begin{minipage}{.65\linewidth}
\centering
\includegraphics[width=\linewidth]{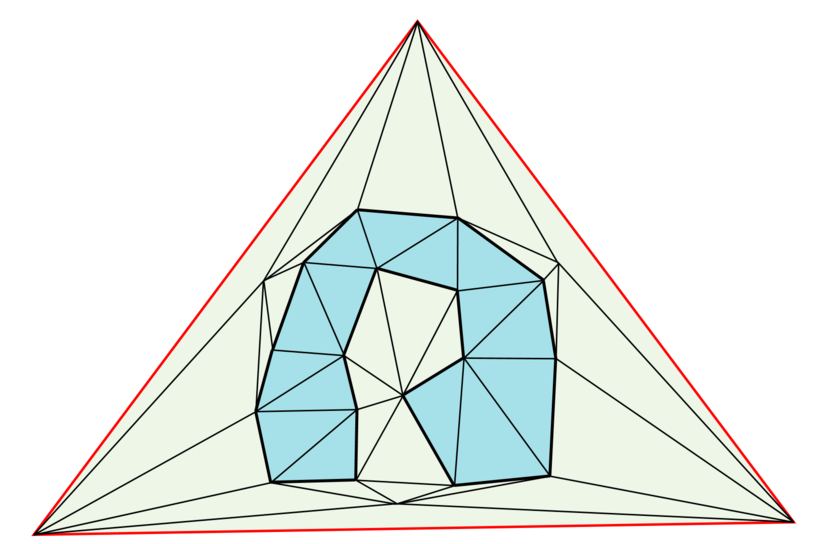}
\textbf{(b)}
\end{minipage}
\caption{Free boundary mapping limitations. \textbf{(a):} this mesh presents two kinds of problems, namely, a global overlap and the mesh wrapped around a boundary vertex. \textbf{(b):} Both problems can be avoided by embedding the mesh into an outer triangulation.}
\label{fig:overlap}
\end{figure}

\begin{figure}[!t]
\centering
\begin{minipage}{.28\linewidth}
\centering
\includegraphics[width=\linewidth]{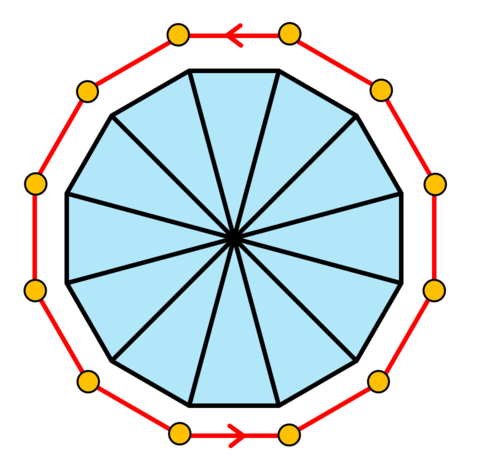}
\textbf{(a)}
\end{minipage}
\begin{minipage}{.3\linewidth}
\centering
\includegraphics[width=\linewidth]{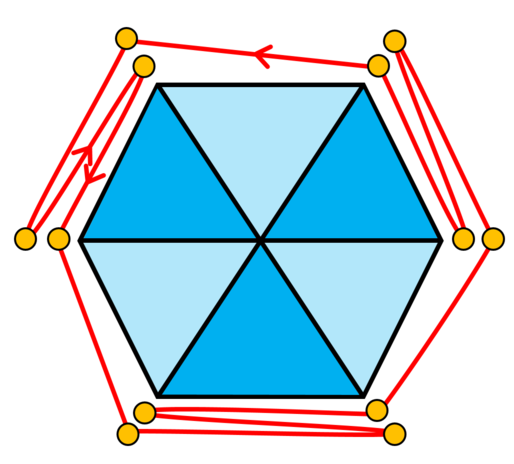}
\textbf{(b)}
\end{minipage}
\begin{minipage}{.3\linewidth}
\centering
\includegraphics[width=\linewidth]{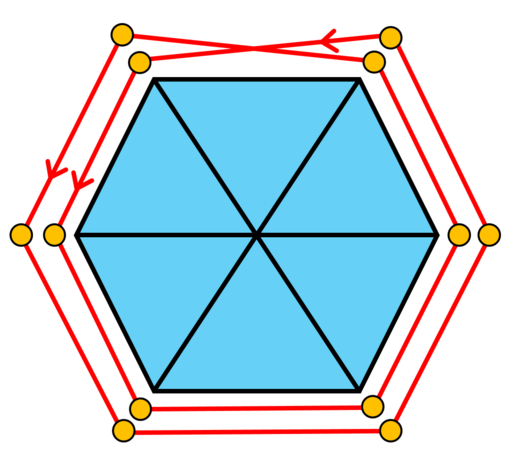}
\textbf{(c)}
\end{minipage}
\caption{Free boundary mapping limitations: three maps of a very simplistic mesh made of 12 triangles. \textbf{(a)} and \textbf{(c)} both are inversion-free maps and thus allowed by our method, whereas the map \textbf{(b)} has inverted elements, and thus is prohibited by our method.}
\label{fig:covering}
\end{figure}

\paragraph*{Numerical challenges}
Even when the problem is well-posed, a robust implementation may present significant difficulties.
As we have said above, the quasi-Newtonian optimization scheme performs well for ``simple'' problems (it passes all the benchmark database!), but may fail for large deformations, where Newton iterations are necessary.
While our modified Hessian matrix is symmetric positive definite, note that for stiff problems the Jacobi preconditioned conjugate gradients can fail and one might need the incomplete Choleski decomposition and beyond.

In practice we have found the method being very robust in 2D settings: we have not encountered a practical test case we were not able to treat with our method.
In 3D, however, it can fail due to the numerical challenges in very anisotropic and highly twisted meshes.


\section{Analysis}
\label{sec:analysis}
This section presents in two parts a rigorous analysis of the penalty method.
First in \S~\ref{sec:hessian} we prove that the modified Hessian matrix $H^+(U, \eps)$ is indeed positive definite, and then in \S~\ref{sec:finite-untangling} we show the origins of Eq.~\eqref{eq:epsilon} for the regularization parameter sequence $\{\eps^k\}_{k=0}^{\dots}$.
Namely, we prove that if the problem has a solution, then an \emph{idealized} minimization method can reach the admissible set $\min \det J>0$ in a finite number of steps.
An immediate consequence of this theorem is that if the problem has a solution, then for some $K<\infty$ the solution $\argmin\limits_U F(U, \eps^K)$ belongs to the admissible set.

\subsection{Modified Hessian matrix}
\label{sec:hessian}
Recall that in our resolution scheme we use the modified Hessian $(d\, \#V) \times (d\, \#V)$ matrix $H^+(U, \varepsilon)$ of the function $F(U, \varepsilon)$ built out of $d\times d$ blocks $H^+_{ij}$
placed on the intersection of $i$-th block row and $j$-th block column.
It is a common practice to add some regularization terms to the Hessian matrix to make it positive definite,
but we propose to modify the finite element (FE) matrix assembly procedure by eliminating some terms potentially leading to an indefinite FE matrix.

To this end, we restrict our attention to a single simplex and we study a function $\phi(J)$ of the Jacobian matrix defined as follows:
\begin{equation}
\label{eq:phi}
\phi(J) := f_\eps(J) + \lambda g_\eps(J) =  \frac{\tr J^\top J}{(\chi(\det J, \varepsilon))^\frac2d} + \lambda \frac{\det^2 J + 1}{\chi(\det J, \varepsilon)}
\end{equation}

Let us denote by $a \in \R^{d^2}$ the (columnwise) flattening of the Jacobian matrix $J$, i.e. the vector composed of the elements of $J$.
We decompose the $d^2 \times d^2$ Hessian matrix of $\phi$ with respect to the Jacobian matrix entries into two parts:
$\frac{\p^2 \phi}{\p a \p a^\top} = M^+ + M^\pm$,
where $M^+$ is a positive definite matrix, and $M^\pm$ can be an indefinite matrix that we neglect.
The matrix $M^\pm$ contains all terms depending on $\chi''$ and second derivatives of $\det J$ with respect to elements of the Jacobian matrix $J$.
Our map is affine on the simplex of interest, therefore its Jacobian matrix $J$ is a linear function of the vertices of the simplex.
Thus, the idea is to compute a positive definite matrix $M^+(J)$, and use the chain rule to get the Hessian matrix with respect to our variables $U$ and assemble the matrix $H^+$.


So, we choose some arbitrary point $J_0$ and we want to show the way to decompose $\frac{\p^2 \phi}{\p a \p a^\top}(J_0)$ into a sum of $M^+(J_0)$ and $M^\pm(J_0)$ with $M^+(J_0)>0$.
To do so, first we write down the first order Taylor expansion $q(D)$ of the function $\chi(D, \varepsilon)$ around some point $D_0 = \det J_0$:
$$
q(D) := \chi(D_0, \eps) + \frac{\p \chi}{\p D}(D_0, \eps) (D - D_0).
$$
Next we define a function $\Phi(a, D)$ as follows:
$$
\Phi(a,D) := \frac{|a|^2}{(q(D))^\frac2d} + \lambda \frac{D^2 + 1}{q(D)}.
$$
Note that $\Phi$ differs a bit from $\phi$: it has one more argument and $\chi$ is replaced by its linearization in the denominator.
While this maneuver might seem obscure, light will be shed very shortly.
$\Phi$ has a major virtue of being convex!
The convexity is easy to prove, refer to Appendix~\ref{app:positive} for a formal proof.

Having built a convex function $\Phi$, it is straightforward to verify that the following decomposition holds:
\begin{equation}
\label{eq:M+M?}
\frac{\p^2 \phi}{\p a \p a^\top}(J_0) = M^+(J_0) + M^\pm(J_0),
\end{equation}
where
\begin{align*}
M^+ &:= \begin{pmatrix} I & \frac{\p D}{\p {a}} \end{pmatrix}  \begin{pmatrix}\frac{\p^2 \Phi}{\p a \p a^\top} & \frac{\p^2 \Phi}{\p a \p D }\\
\frac{\p^2 \Phi}{\p D \p a^\top } & \frac{\p^2 \Phi}{\p D^2} \end{pmatrix} \begin{pmatrix} I & \frac{\p D}{\p {a}} \end{pmatrix}^\top, ~\text{and} \\
M^\pm &:= \frac{\p \Phi}{\p D}\frac{\p^2 D}{\p {a} \p {a}^\top} - \frac{\chi''}{\chi}\left(\frac2d f_\eps + \lambda g_\eps\right)\frac{\p D}{\p a} \frac{\p D}{\p a^\top}.
\end{align*}
The easiest way to check that the equality~\eqref{eq:M+M?} holds is to note that at the point $J_0$ we have $q=\chi$, $q' = \chi'$,
and therefore we have
$$
\frac{\p \phi(a)}{\p {a}} = \frac{\p \Phi(a, D(a))}{\p {a}} + \frac{\p \Phi(a, D(a))}{\p D} \frac{\p D(a)}{\p {a}}.
$$
To calculate the Hessian $\frac{\p^2 \phi}{\p a \p a^\top}(J_0)$, it suffices to differentiate this expression one more time and add the terms in $\chi''$ that were zeroed out by the linearization.


To sum up, in our computations, for each simplex we approximate the Hessian matrix $\frac{\p^2 \phi}{\p a \p a^\top}$ by the $d^2 \times d^2$ matrix $M^+$ and we neglect the term $M^\pm$.
Thanks to the convexity of $\Phi$, it is trivial to verify that for any choice of $J_0$ the matrix $M^+$ is definite positive.
Then we use the chain rule over $M^+$ to get the Hessian matrix with respect to our variables $U$, and we assemble a $(d\, \#V) \times (d\, \#V)$ approximation $H^+$ of the Hessian matrix for the energy function $F(U, \eps)$.
Matrix $H^+$ is positive definite provided that at least $d$ mesh vertices are fixed.
If less than $d$ points are fixed, rigid body transformations are allowed.
The energy is invariant w.r.t rigid body transformations, so when constraints allow for such transformations, matrix $H^+$ becomes positive semidefinite
Note that the leading blocks $H^+_{ii}$ are always positive definite.
Refer to Appendix~\ref{app:gradient-hessian} for further details on the finite element assembly procedure.

\subsection{Choice of $\varepsilon^k$}
\label{sec:finite-untangling}

In this section we provide a strategy for the choice of the regularization parameter $\varepsilon^k$ at each iteration.
Namely, we prove that an idealized minimization algorithm reaches the admissible set $\min \det J>0$ in a finite number of iterations.


\begin{theorem}
\label{th:th}
Let us suppose that the admissible set is not empty, namely there exists a mesh $U^*$ satisfying $F(U^*,0) < + \infty$.
We also suppose that we have a minimization algorithm satisfying one of the following efficiency conditions for some $0<\sigma<1$:
\begin{itemize}
\item \textbf{Either} the essential descent condition holds
\begin{equation}
\label{teorem.cond6}
F(U^{k+1}, \eps^k) \leq (1-\sigma) F(U^{k}, \eps^k),
\end{equation}
\item \textbf{or} the vector $U^{k}$ satisfies the quasi-minimality condition:
\begin{equation}
\label{teorem.cond6-2}
\min\limits_{U} F(U, \eps^k) > (1-\sigma)F(U^{k}, \eps^k).
\end{equation}
\end{itemize}
Then the admissible set is reachable by solving a finite number of minimization problems in $U$ with $\eps^k$ fixed for each problem.
In other words, under a proper choice of the regularization parameter sequence $\varepsilon^k, k=0\dots K$,
we obtain $F(U^{K}, 0) < + \infty$.
\end{theorem}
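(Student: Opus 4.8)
The plan is to contrast two facts: the energy of the fixed admissible mesh $U^*$ stays uniformly bounded as $\eps\to 0$, whereas the energy of any \emph{tangled} mesh necessarily blows up as $\eps\to 0$. I would first record the monotonicity of $\chi$: since $\frac{\p\chi}{\p\eps}=\frac{\eps}{2\sqrt{\eps^2+D^2}}>0$ and both $f_\eps,g_\eps$ are decreasing in $\chi$, the energy $F(U,\eps)$ is nonincreasing in $\eps$ for every fixed $U$. Applied to $U^*$, whose Jacobians are all strictly positive, this gives $F(U^*,\eps)\le F(U^*,0)=:F^*<+\infty$ for all $\eps\in(0,1]$. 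On the other hand, $\chi(D,\eps)$ is increasing in $D$, so $\chi(D,\eps)\le\chi(0,\eps)=\eps/2$ whenever $D\le 0$; hence for a mesh with a nonpositive Jacobian the offending element already forces $g_\eps\ge \frac{\det^2 J+1}{\eps/2}\ge \frac{2}{\eps}$, so (for $\lambda>0$) $F(U,\eps)\ge \frac{2\lambda\,\vol_{\min}}{\eps}$, where $\vol_{\min}:=\min_t\vol(T_t)$.

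Next I would show that $\eps^k\to 0$ geometrically. Inspecting Eq.~\eqref{eq:epsilon}, in the admissible branch the factor is $1-\sigma^k$, and in the tangled branch the factor equals $1-\sigma^k\frac{\sqrt{(D_-^{k+1})^2+(\eps^k)^2}}{|D_-^{k+1}|+\sqrt{(D_-^{k+1})^2+(\eps^k)^2}}$, whose fraction lies in $[\tfrac{1}{2},1)$; since $\sigma^k\ge\tfrac{1}{10}$, both branches multiply $\eps^k$ by at most $\tfrac{19}{20}$, giving $\eps^k\le(\tfrac{19}{20})^k\to 0$.

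The heart of the argument --- and the step I expect to be the main obstacle --- is to prove that the precise form of Eq.~\eqref{eq:epsilon} controls the energy inflation caused by shrinking $\eps$. Concretely, I would verify the inequality
\[
\chi\bigl(D_-^{k+1},\eps^{k+1}\bigr)\ \ge\ (1-\sigma^k)\,\chi\bigl(D_-^{k+1},\eps^k\bigr),
\]
using the closed form $\chi(D,\eps)=\frac{\eps^2}{2(|D|+\sqrt{D^2+\eps^2})}$ valid for $D<0$ (and $\chi=\tfrac{\eps}{2}$ at $D=0$). Because the ratio $\chi(D,\eps^{k+1})/\chi(D,\eps^k)$ is monotone in $D$ and therefore smallest at the worst element $D=D_-^{k+1}$, the same bound then holds for \emph{every} simplex; as $f_\eps\propto\chi^{-2/d}$ and $g_\eps\propto\chi^{-1}$ with $2/d\le1$, passing from $\eps^k$ to $\eps^{k+1}$ multiplies $F(U^{k+1},\cdot)$ by at most $\frac{1}{1-\sigma^k}$.

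Finally I would assemble the contradiction. Suppose no iterate is admissible, i.e.\ $D_-^k\le 0$ for all $k$. At a step satisfying the quasi-minimality condition~\eqref{teorem.cond6-2}, comparing with $U^*$ gives $F(U^k,\eps^k)<\frac{1}{1-\sigma}\min_U F(U,\eps^k)\le\frac{F^*}{1-\sigma}$. At a step satisfying the essential-descent condition~\eqref{teorem.cond6}, combining the descent factor with the inflation bound above yields $F(U^{k+1},\eps^{k+1})\le F(U^k,\eps^k)$, so the energy does not increase. Hence $F(U^k,\eps^k)$ is bounded by the constant $B:=\max\bigl(F(U^0,\eps^0),\tfrac{F^*}{1-\sigma}\bigr)$ for all $k$. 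But tangledness forces $F(U^k,\eps^k)\ge\frac{2\lambda\,\vol_{\min}}{\eps^k}\to+\infty$, a contradiction; therefore some $U^K$ satisfies $D_-^K>0$, i.e.\ $F(U^K,0)<+\infty$, which is the claim. The delicate point in this last paragraph is the interaction between the fixed $\sigma$ of the efficiency conditions and the floor $\tfrac{1}{10}$ inside $\sigma^k$: the telescoping closes cleanly precisely when the realized descent is at least $\tfrac{1}{10}$, so I would need to check that the floor can be absorbed (e.g.\ that one may take $\sigma\ge\tfrac{1}{10}$) in order for the inflation factor $\frac{1}{1-\sigma^k}$ to always be matched by the descent factor.
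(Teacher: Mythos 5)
Your overall architecture coincides with the paper's proof: choose the $\eps$-update so that the energy inflation caused by shrinking $\eps$ is matched by the guaranteed descent \eqref{teorem.cond6} (or bounded, via quasi-minimality \eqref{teorem.cond6-2}, against the admissible mesh $U^*$), conclude that $F(U^k,\eps^k)$ stays bounded, and contradict this with the blow-up forced by a nonpositive Jacobian. Your two local variations are sound and mildly different from the paper: to propagate the bound $\chi(D,\eps^{k+1})\ge(1-\sigma)\chi(D,\eps^k)$ from the worst simplex to all simplices you invoke monotonicity in $D$ of the ratio $\chi(D,\eps^{k+1})/\chi(D,\eps^k)$, whereas the paper uses convexity of $\chi$ in $\eps$ plus monotonicity of $\chi/\frac{\partial\chi}{\partial\eps}$ in $D$; and you conclude via geometric decay of $\eps^k$ against the pointwise bound $F\ge 2\lambda\vol_{\min}/\eps^k$, whereas the paper telescopes $\eps^0-\eps^K\ge 2\sigma\sum_k\chi(D_-^{k+1},\eps^k)$. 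One caution on the step you yourself call the main obstacle: the inequality at the worst element is \emph{not} obtainable from the crude closed-form bound $\chi(D,(1-t)\eps)\ge(1-t)^2\chi(D,\eps)$, since $(1-t)^2\ge 1-\sigma$ fails when $|D_-^{k+1}|\ll\eps^k$; the clean argument is the paper's tangent-line (convexity) bound $\chi(D,\eps')\ge\chi(D,\eps)+(\eps'-\eps)\frac{\partial\chi}{\partial\eps}(D,\eps)$, which with the stated branch factor yields exactly the factor $1-\sigma$ at $D=D_-^{k+1}$.

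The genuine gap is the one you flagged and left open, and it does not close as stated: you run the argument with the practical rule \eqref{eq:epsilon}, whose coefficient $\sigma^k=\max\bigl(\frac1{10},\,1-F(U^{k+1},\eps^k)/F(U^k,\eps^k)\bigr)$ is floored at $\frac1{10}$, while the theorem assumes the efficiency conditions only for some arbitrary $\sigma\in(0,1)$. If $\sigma<\frac1{10}$ and the realized descent $\delta^k:=1-F(U^{k+1},\eps^k)/F(U^k,\eps^k)$ satisfies $\sigma\le\delta^k<\frac1{10}$, the floor forces $\sigma^k=\frac1{10}$, so $\eps$ is shrunk too aggressively: the per-step inflation bound is $\frac{1}{1-1/10}=\frac{10}{9}$ while the descent factor is only $1-\delta^k>\frac{9}{10}$, giving a worst-case growth factor $\frac{10(1-\sigma)}{9}$ per iteration. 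Your contradiction then fails quantitatively: the upper bound may grow like $\bigl(\frac{10(1-\sigma)}{9}\bigr)^K$ while your lower bound grows only like $\bigl(\frac{20}{19}\bigr)^K$ (the tangled-branch shrink factor can be as weak as $\frac{19}{20}$), and for small $\sigma$ the former dominates, so no contradiction is reached. The repair is exactly what the paper does: the statement lets you \emph{choose} the sequence $\eps^k$, so use the theorem's own $\sigma$ in the update (rule \eqref{teorem.eq.eps}, i.e.\ \eqref{eq:epsilon} with $\sigma^k$ replaced by $\sigma$); then the inflation $\frac1{1-\sigma}$ exactly cancels the descent $1-\sigma$, each descent step is non-increasing, each quasi-minimal step is bounded by $\frac{1}{(1-\sigma)^2}\min_U F(U,0)$, and both of your concluding arguments go through for every $\sigma\in(0,1)$. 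Two minor slips to fix along the way: your uniform constant should be $B=\max\bigl(F(U^0,\eps^0),\,F^*/(1-\sigma)^2\bigr)$ rather than $F^*/(1-\sigma)$, because a quasi-minimal step is still followed by one $\eps$-shrink before descent resumes; and your blow-up bound (like the paper's lower bound $\alpha_i\psi_i\ge\lambda\min_i\alpha_i$) silently requires $\lambda>0$.
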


\begin{proof}
The main idea is to expose an explicit way to build a decreasing sequence $\{\varepsilon^k\}_{k=0}^\infty$ such that the sequence $\{F(U^k, \eps^k)\}_{k=0}^\infty$ is bounded from above.
Then we can prove that the admissible set is reachable in a finite number of steps by a simple \emph{reductio ad absurdum} argument.

First of all, the function $F(U, \eps)$ can be rewritten as follows
\begin{equation}
\label{teorem.eq1}
F(U,\eps) =  \sum\limits_{i} \alpha_i \frac{\psi_i(U)}{\chi(D_i,\eps)},
\end{equation}
where $D_i=D_i(U)$ denotes the Jacobian determinant for $i$-th simplex of the mesh ($D_i=\det J_i$),
and $\alpha_i > 0$ are positive, separated from zero weights assigned to each simplex.
The functions
$$
\psi_i(U, \eps) := \chi(D_i,\eps)^{1 -\frac2d}\tr J_i^\top J_i
+ \lambda (D_i^2 + 1)
$$
defined according to Eq.\eqref{eq:fege} are positive and bounded from below as
$$
\alpha_i \psi_i(U, \eps) \ge \lambda \min_i \alpha_i.
$$
Note also that $\psi_i(U, \eps)$ are increasing functions of $\eps$.

Our goal is to build a decreasing sequence $\{\varepsilon^k\}_{k=0}^\infty$ such that the sequence $\{F(U^k, \eps^k)\}_{k=0}^\infty$ is bounded from above.
We split the construction into two parts: first we suppose that at some iteration $k$ the essential condition \eqref{teorem.cond6} is satisfied, and then we explore the case~\eqref{teorem.cond6-2}.

\textit{Suppose that the condition \eqref{teorem.cond6} holds at iteration $k$.}
In order to guarantee that the function does not increase, it suffices to establish the following inequality:
\begin{equation}
\label{f_bounded}
(1 - \sigma) F(U^{k+1}, \eps^{k+1}) \leq F(U^{k+1},\eps^{k}).
\end{equation}


By noting that
$\psi_i(U^{k+1}, \eps^{k+1}) \leq \psi_i(U^{k+1}, \eps^{k})$, Ineq.~\eqref{f_bounded} is implied if the following condition holds:
\begin{equation}
\label{suff_cond}
\forall i: \qquad
(1-\sigma)\chi(D_i^{k+1},\eps^{k})  \leq \chi(D_i^{k+1},\eps^{k+1}) 
\end{equation}
where
$D_i^{k+1} := D_i(U^{k+1})$ denotes the Jacobian determinant of simplex $i$ at iteration $k+1$.

Let us show a constructive way to build $\eps^{k+1}$ such that Ineq.~\eqref{suff_cond} is satisfied.
To do so, first note that $\chi$ is convex with respect to the second argument, hence
$$
\chi(D_i^{k+1},\eps^{k+1})-\chi(D_i^{k+1},\eps^k) \ge (\eps^{k+1}-\eps^k) \frac{\partial}{\partial\eps} \chi(D_i^{k+1},\eps^k).
$$
Thus, since $\chi$ and $\frac{\partial}{\partial\eps} \chi(D,\eps)$ are both positive, Ineq.~\eqref{suff_cond} holds provided that
\begin{equation}
\label{suff2cond}
\forall i: \qquad \eps^{k+1} \ge \eps^k - \sigma  \frac{\chi(D_i^{k+1},\eps^k)}{\frac{\partial}{\partial\eps}\chi(D_i^{k+1},\eps^k)}
\end{equation}

In its turn, since $\frac{\chi(D,\eps)}{\frac{\partial}{\partial\eps}\chi(D,\eps)}$ is an increasing function of $D$,
condition~\eqref{suff2cond} can be simplified to
$$
\eps^{k+1} \ge \eps^k - \sigma  \frac{\chi(D_-^{k+1},\eps^k)}{\frac{\partial}{\partial\eps}\chi(D_-^{k+1},\eps^k)},
$$
where $D^{k+1}_- := \min\limits_{i}~D_i^{k+1}$ is the minimum value of the Jacobian determinant over all simplices.

Hence, if $U^{k+1}$ is an approximate solution of the minimization problem $\argmin\limits_U F(U, \eps^k)$ with fixed parameter $\eps^k$,
we can use the following update rule for $\eps^{k+1}$:
\begin{equation}
\label{teorem.eq.eps}
\eps^{k+1} =  \left\{ 
\begin{array}{lcl} 
\eps^{k} - \sigma\frac{\chi(D_-^{k+1},\eps^k)}{\frac{\partial}{\partial\eps}\chi(D_-^{k+1},\eps^k)},
&\mbox{ if }& D_-^{k+1} < 0, \\ 
\eps^{k} - \sigma\frac{\chi(0,\eps^k)}{\frac{\partial}{\partial\eps}\chi(0,\eps^k)},
&\mbox{ if }& D_-^{k+1} \ge 0, \\
\end{array} 
\right.
\end{equation}
This update rule guarantees that Ineq.~\eqref{f_bounded} is satisfied; coupled with the assumption \eqref{teorem.cond6} of the theorem,
this implies the required non-growth property of the function values sequence:
\begin{equation}
\label{f-non-increase}
F(U^{k+1}, \eps^{k+1}) \leq F(U^{k}, \eps^{k}). 
\end{equation}

\textit{Consider now the case where condition~\eqref{teorem.cond6-2} holds at iteration $k$.}
Note that condition~\eqref{teorem.cond6-2} essentially means that our current solution $U^k$ is very close to the global minimum of $F(U, \eps^k)$,
and thus Ineq.~\eqref{teorem.cond6} cannot be satisfied.
Nevertheless, we can use the same update rule \eqref{teorem.eq.eps} for computation of $\eps^{k+1}$.
Indeed, with this choice we have
\begin{align*}
F(U^{k+1}, \eps^{k+1})  & \leq \frac1{(1-\sigma)} F(U^{k+1}, \eps^{k}) \leq \frac1{(1-\sigma)}  F(U^{k}, \eps^{k}) < \\
& < \frac1{(1-\sigma)^2} \min\limits_{U} F(U, \eps^k) < \frac1{(1-\sigma)^2} \min\limits_{U} F(U, 0).
\end{align*}
Here the last inequality provides a global bound on the function values sequence, and it is 
based on the observation $\frac{\partial}{\partial\eps} \chi(D,\eps) > 0$.

\vspace{2mm}

To sum up, we have shown a way to build a sequence $\{\varepsilon^k\}_{k=0}^\infty$ such that the sequence $\{F(U^k, \eps^k)\}_{k=0}^\infty$ is bounded from above.
Now let us prove that the update rule~\eqref{teorem.eq.eps} allows to reach the admissible set in a finite number of steps.
To do so, we use a \emph{reductio ad absurdum} argument.

Suppose that the admissible set is never reached for an infinite decreasing sequence $\{\eps^{k}\}_{k=0}^{\infty}$ built using the update rule~\eqref{teorem.eq.eps},
i.e.  $D_-^{k+1} < 0 ~ \forall k\geq 0$.

Then, using the facts that $\frac{\partial}{\partial\eps} \chi(D,\eps) \le \frac12$ and $\{\varepsilon^k\}_{k=0}^\infty$ form a decreasing sequence of non-negative values,
we can rewrite the update rule~\eqref{teorem.eq.eps} for some $K>0$:
$$
\eps^0-\eps^K  \ge 2\sigma\sum_{k=0}^{K-1}\chi(D_-^{k+1},\eps^{k}) \ge 2 \sigma K\min_{0\le k <K}\chi(D_-^{k+1},\eps^{k}), 
$$
with an immediate consequence that for an arbitrarily large $K$ we have the following inequality:
$$
\max_{0\le k <K}\frac{1}{\chi(D_-^{k+1},\eps^{k})} \ge \frac{2 \sigma K}{\eps^0}.
$$
Since all terms $\alpha_i \psi_i(U)$ in (\ref{teorem.eq1}) are bounded from below, 
the resulting estimate contradicts the boundedness of $F(U^k, {\eps^k})$, thus concluding our proof.

\end{proof}

\begin{remark}
An important corollary of Th.~\ref{th:th} is that, provided that the admissible set is not empty,
there exists an iteration $K<\infty$ such that the global minimum of the function $F(U, \eps^K)$ belongs to the admissible set.
The proof is rather obvious: suppose we have an idealized minimizer such that $U^{k+1} = \argmin\limits_U F(U, \eps^k)$.
This minimizer always satisfies the conditions of Th.~\ref{th:th}, therefore it can untangle the mesh in a finite number of steps.
\end{remark}

\begin{remark}
For our choice of the regularization function $\chi$, the update rule~\eqref{teorem.eq.eps} can be instatiated as follows:
\begin{equation}
\label{eps.update}
\eps^{k+1} =  \left\{ 
\begin{array}{lcl} 
\left(1 - \frac{\sigma\sqrt{(D_-^{k+1})^2+(\eps^k)^2}}
{|D_-^{k+1}|+\sqrt{(D_-^{k+1})^2+(\eps^k)^2}}\right)\eps^{k}, 
&\mbox{ if }& D_-^{k+1} < 0, \\ 
&&\\
(1-\sigma)\eps^{k}, 
&\mbox{ if }& D_-^{k+1} \ge 0, \\
\end{array} 
\right.
\end{equation}

In practice the global estimate $\sigma$ is not known in advance, and the optimization routine may be far from the ideal.
For each minimization step we compute the local descent coefficient $\sigma^k$:
\[
\sigma^k := 1 -  \frac{F(U^{k+1}, \eps^k)}{F(U^{k}, \eps^k)}.
\]
When $\sigma^k \geq \sigma$ one can use the update rule \eqref{eps.update} using the local value $\sigma^k$ guaranteeing that Ineq.~\eqref{f-non-increase} holds. 
In the case $\sigma^k < \sigma$ one should check that condition \eqref{teorem.cond6-2} holds for prescribed $\sigma$. If positive, we can assign $\sigma^k = \sigma$ and use update rule \eqref{eps.update}.
If one cannot assure \eqref{teorem.cond6-2}, it means that minimization procedure for $\eps^k$ failed and theorem cannot be applied (it does not mean that Alg.~\ref{alg:lbfgs} will not reach the admissible set!).
In numerical experiments we use value $\sigma = \frac1{10}$.
\end{remark}


\section{Conclusion}
\label{sec:conclusion}

Producing maps without reverted elements is a major challenge in geometry processing. Inspired by untangling solutions in computational physics, our solution outperforms the state of the art both in terms of robustness and distortion. It is easy to use since we provide a simple implementation that is free of commercial product dependency (compiler, library, etc.).
Moreover, the energy is estimated independently on each triangle / tetrahedra, making it a good candidate to be adapted to more difficult settings including free boundary and global parameterization.


\bibliographystyle{ACM-Reference-Format}
\bibliography{instantangle.bib}

\appendix

\section{Comprehensive design formulæ}
\label{app:gradient-hessian}

\begin{figure}[!t]
\centering
\includegraphics[width=.7\linewidth]{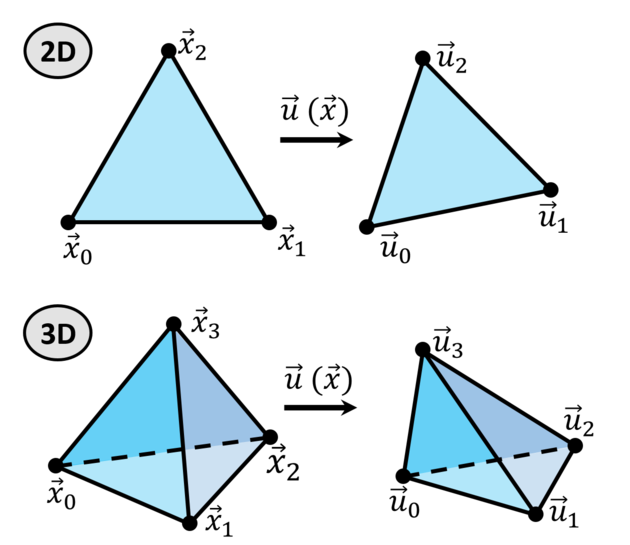}
\caption{On each simplex the map $\vec u(\vec x)$ is affine and is entirely defined by the position of the vertices of the domain simplex $\{\vec{x}_i\}$ and its image $\{\vec u_i\}$.}
\label{fig:simplex}
\end{figure}

Given a map $\vec{u}$, let us denote by $\vec{a}_i,~~ i = 1,2~(,3)$ the tangent basis, i.e. vectors forming the columns of the Jacobian matrix $J$.
For example, in 2D we have $\vec{a}_1:=\begin{pmatrix}\frac{\p u}{\p x} & \frac{\p v}{\p x}\end{pmatrix}^\top$ and $\vec{a}_2:=\begin{pmatrix}\frac{\p u}{\p y} & \frac{\p v}{\p y}\end{pmatrix}^\top$.
Let us denote by $\vec{b}_i$ the dual basis, i.e. vectors chosen in the way that $\vec{a}_i^\top \vec{b}_j =  \delta_{ij} \det J$ for all indices $i,j$.
In particular, for the 2D settings the dual basis can be written as $\vec{b}_1 := \begin{pmatrix}\frac{\p v}{\p y} & -\frac{\p u}{\p y}\end{pmatrix}^\top$
and $\vec{b}_2 := \begin{pmatrix} -\frac{\p v}{\p x} & \frac{\p u}{\p x}\end{pmatrix}^\top$.
In the 3D case $\vec{b}_k = \vec{a}_i \times \vec{a}_j$, where $i,j,k$ is cyclic permutation from $1,2,3$.
It is a handy choice of variables, in particular, $\tr J^\top J = \sum_i |\vec{a}_i|^2$ and $\frac{\p \det J}{\p \vec{a}_i} = \vec{b}_i$.
For further simplification of notations we will use $\chi$ for $\chi(D, \eps)$, $\chi'$ for
$\frac{\p \chi(D, \eps)}{\p D}$ and
$a^\top = (\vec{a}_1^\top \dots \vec{a}_d^\top), \ \  b^\top = (\vec{b}_1^\top \dots \vec{b}_d^\top)$.

\subsection{Gradient}

In order to derive expressions for the gradient and the Hessian matrix of $F$, we write down explicitly
the Jacobian matrix $J$ for the affine map of a simplex $T$ with vertices $\vec{u}_0, \vec{u}_1, \dots, \vec{u}_d$:
\begin{align*}
J & = (\vec{a}_1 \dots \vec{a}_d) = (\vec{u}_1 - \vec{u}_0 \ \vec{u}_2 - \vec{u}_0 \dots \vec{u}_d - \vec{u}_0)\, S^{-1} =\\
  & = (\vec{u}_0 \dots \vec{u}_d) Z,
\end{align*}
where
\[
S := (\vec{x}_1 - \vec{x}_0 \ \vec{x}_2 - \vec{x}_0 \dots \vec{x}_d - \vec{x}_0), \ \ \det S > 0
\]
is the shape matrix, $\vec{x}_i$ are vertices of ``ideal'' or ``target'' shape for the image of the simplex $T$,
and $Z$ is a $(d+1)\times d$ matrix defined as
$$
Z:=\{z_{ij}\}:=\begin{pmatrix}-1 & \dots & -1\\ & I &  \end{pmatrix} S^{-1} 
$$

Since the Jacobian matrix is a linear function of $\vec{u}_i$, we have
\[
\frac{\p \vec{a}_i}{\p \vec{u}_j^\top} = z_{ji} I, \quad i = 1, \dots, d,  \ j = 0, \dots, d.
\]
The additive contribution to gradient of $F$ from the simplex $T$ can be written using correspondence of local indices $0-d$ and global indices $g_0 - g_d$ in the list of vertices:
\begin{align*}
(\nabla F)_{g_j} \mathrel{+}=& \frac{\det S}{d!} \sum\limits_{i=1}^d  \frac{\p \vec{a}_i^\top}{\p \vec{u}_j} \frac{\p \phi}{\p \vec{a}_i} = \\
        =  &\frac{\det S}{d!} \sum\limits_{i=1}^d z_{ji} \frac{\p \phi}{\p \vec{a}_i}, \quad \ j = 0, \dots,d,
\end{align*}
where function $\phi(J)$ is defined in \S~\ref{sec:hessian}.
Let us provide an explicit expression for $\frac{\p \phi}{\p \vec{a}_i}$:
$$ \frac{\p \phi}{\p \vec{a}_i} = 
\frac{2}{\chi^\frac2d} \vec{a}_i -\frac{1}{\chi}\left(\frac2d f_\varepsilon\chi' - 2\lambda \det J + \lambda g_\varepsilon \chi'\right)\vec{b}_i
$$

\subsection{Hessian}

The blocks of the nonnegative definite part of Hessian matrix of $F$ can be updated using the following general formula
\[
H^+_{g_j g_i} \mathrel{+}=  \frac{\det S}{d!} \sum_{m,l} \frac{\p \vec{a}_m^\top}{\p \vec{u}_j} M^+_{ml} \frac{\p \vec{a}_l}{\p \vec{u}_i^\top},
\]
where $M^+_{ml}$ denotes a $d \times d$ block of $d^2\times d^2$ positive definite matrix $M^+$ defined in Eq.~\eqref{eq:M+M?}.
Let us provide an explicit expression for the matrix:
$$
M^+ = \begin{pmatrix} I & b \end{pmatrix}  \begin{pmatrix}\frac{\p^2 \Phi}{\p a \p a^\top} & \frac{\p^2 \Phi}{\p a \p D }\\
\frac{\p^2 \Phi}{\p D \p a^\top } & \frac{\p^2 \Phi}{\p D^2} \end{pmatrix} \begin{pmatrix} I & b \end{pmatrix}^\top, ~\text{where} 
$$
\begin{align*}
\frac{\p^2 \Phi}{\p a \p a^\top} &= \frac2{\chi^\frac2d} I \\
\frac{\p^2 \Phi}{\p D^2} &= \frac2d \left(1 + \frac2d\right) |a|^2 \frac{{\chi'}^2}{\chi^{2 + \frac2d}} + \lambda \left(\frac2{\chi} - 4D \frac{\chi'}{\chi^2} + 2(1 + D^2) \frac{{\chi'}^2}{\chi^3} \right)\\
\frac{\p^2 \Phi}{\p a \p D} &= -\frac4d \frac{\chi'}{\chi^{1 + \frac2d}} a.
\end{align*}

Obviously the leading $d \times d$ blocks of the matrix $H^+$ are strictly positive definite and can be used to build  Newton-type minimization algorithm.

\section{Convexity of $\Phi$}
\label{app:positive}
Recall that the function $\Phi$ is defined as follows (\S~\ref{sec:hessian}):
\[
\Phi(a,D) := \frac{|a|^2}{q^\frac2d} + \lambda \frac{D^2 + 1}{q},
\]
where $a \in \R^{d^2}$ is the (columnwise) flattening of the Jacobian matrix $J$, i.e. the vector composed of the elements of $J$.

\vfill\pagebreak

\begin{lemma}
$\nabla \nabla^\top \Phi > 0$
\end{lemma}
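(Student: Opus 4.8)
The plan is to exploit the additive structure $\Phi = T_1 + \lambda T_2$, where $T_1(a,D) := |a|^2/q^{\frac2d}$ and $T_2(D) := (D^2+1)/q$, and to show that each summand is convex while the two are strictly convex in \emph{complementary} directions: $T_1$ is strictly convex in the $a$-variables, and $\lambda T_2$ is strictly convex in $D$. Throughout I work on the open region $\{q(D)>0\}$ on which $\Phi$ is defined, and I use that $q$ is affine in $D$ with slope $q'=\chi'(D_0,\varepsilon)>0$ (recall $\chi'\in(0,1)$ for our regularizer).

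For $T_1$ I would invoke the composition rule for convex functions. The quadratic-over-linear map $(a,s)\mapsto |a|^2/s$ is jointly convex on $\R^{d^2}\times\R_{>0}$ and nonincreasing in $s$. The inner function $s(D):=q(D)^{\frac2d}$ is concave in $D$, being a nonnegative affine function raised to the power $\frac2d\le 1$. Since a convex function that is nonincreasing in its second argument, composed with a concave function, is convex, $T_1(a,D)=|a|^2/s(D)$ is convex in $(a,D)$. In particular its Hessian is positive semidefinite, and its pure $a$-block equals $\tfrac{2}{q^{2/d}}I$, which is strictly positive definite.

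For $T_2$, which depends on $D$ only, a direct computation settles strict convexity. Writing $N=D^2+1$ and using $q''=0$, one gets $T_2''=q^{-3}\bigl(N''q^2-2N'qq'+2N(q')^2\bigr)$; substituting $N''=2,\,N'=2D$ and completing the square, the bracket collapses to $2\bigl((q-Dq')^2+(q')^2\bigr)=2\bigl(c_0^2+(q')^2\bigr)$, where $c_0=q-Dq'$ is the (constant) intercept of $q$. Hence $T_2''=2\bigl(c_0^2+(q')^2\bigr)/q^3>0$, because $q>0$ and $q'>0$, so $T_2$ is strictly convex in $D$.

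It remains to upgrade semidefiniteness to strict definiteness, which is the one genuinely delicate point. Neither summand is strictly convex on its own — $T_1$ degenerates along certain directions involving $\delta D$ (e.g. the pure $\delta D$ direction at $a=0$, and a mixed $a$–$D$ direction when $d=2$), while $T_2$ is flat in every $a$-direction — so strictness must come from combining them, and this is exactly where $\lambda>0$ is used. Given a test vector $(\delta a,\delta D)\ne 0$, the Hessian quadratic form equals $H_{T_1}[\delta a,\delta D]+\lambda T_2''(\delta D)^2$ with $H_{T_1}\ge 0$. If $\delta D\ne 0$ the second term is strictly positive; if $\delta D=0$ then $\delta a\ne 0$ and the first term reduces to $\tfrac{2}{q^{2/d}}|\delta a|^2>0$. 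Either way the form is positive, giving $\nabla\nabla^\top\Phi>0$. The main obstacle is thus not the convexity of the individual pieces but verifying that their null spaces intersect only trivially; the clean split above makes this transparent.
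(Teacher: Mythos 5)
Your proof is correct and, at the level of the decomposition, identical to the paper's: your split $\Phi = T_1 + \lambda T_2$ is exactly the paper's $\nabla\nabla^\top\Phi = P + \lambda Q$, and your computation $T_2'' = 2\bigl((q-Dq')^2+(q')^2\bigr)/q^3 > 0$ is precisely the paper's verification that $Q_{22}$ is a strictly positive quadratic. Where you genuinely differ is the treatment of $T_1 = |a|^2/q^{2/d}$: the paper writes out the full $(d^2+1)\times(d^2+1)$ Hessian block matrix $P$ and checks nonnegativity via the Schur complement $P_{22}-P_{21}P_{11}^{-1}P_{12} = \frac2d\left(1-\frac2d\right)|a|^2(q')^2q^{-2-\frac2d} \geq 0$, while you bypass all matrix computation with a convex-analysis composition rule (the jointly convex, denominator-nonincreasing quadratic-over-linear function composed with the concave map $D\mapsto q(D)^{2/d}$). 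Your route is shorter, harder to get wrong, and makes transparent that the exponent $2/d\leq 1$ is what drives the result; the paper's explicit computation is not wasted, however, since those same blocks are reused verbatim in the finite-element assembly of $M^+$ in Appendix~\ref{app:gradient-hessian}. Finally, you are more careful than the paper on the one delicate point: the paper concludes only ``overall convexity'' from $P\geq 0$ and $Q\geq 0$, whereas the lemma asserts a strict inequality, and strictness requires exactly your null-space argument (the Hessian of $T_1$ is singular at $a=0$ for every $d$, and for $d=2$ along $(\delta a,\delta D)=(q'a/q,\,1)$) together with the hypothesis $\lambda>0$ that you flag explicitly --- a hypothesis the lemma leaves implicit even though the paper elsewhere allows $\lambda=0$.
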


\begin{proof}
It is straightforward to see that the  $(d^2+1)\times (d^2+1)$ Hessian matrix of $\Phi$ can be written in the $2 \times 2$ block representation:
$$
\nabla \nabla^\top \Phi = P + \lambda Q,
$$
where
\begin{align*}
P &:= \begin{pmatrix} \frac2{q^\frac2d} I & - \frac4d \frac{q'a}{q^{1 + \frac2d}}\\ - \frac4d \frac{q'a^\top}{q^{1 + \frac2d}} & \frac2d (1 + \frac2d )\frac{|a|^2q'^2}{q^{2 + \frac2d}}\end{pmatrix} ~~\text{and}\\
Q &:= \begin{pmatrix} 0 & 0 \\ 0 & \frac2{q} - 4D \frac{q'}{q^2} + 2(1 + D^2) \frac{{q'}^2}{q^3}\end{pmatrix}.
\end{align*}
It is trivial to verify that $Q\geq 0$, since $Q_{22}$ is a strictly positive quadratic function of argument $D$.
Since the leading blocks of the matrix $P$ are positive definite and the Schur complement
\[
P_{22} - P_{21} P_{11}^{-1} P_{12} = \frac{|a|^2}{q^{2 + \frac2d}} \frac2d \left(1 - \frac2d \right) \geq 0
\]
is nonnegative definite, overall convexity of $\Phi$ is established.
\end{proof}

\begin{remark}
\label{remark:polyconvexity}
Note that we have just proved the convexity of the function $\Phi$.
As an immediate consequence we obtain polyconvexity of the functional~\eqref{eq:winslow}, because it is a particular case of our functional~\eqref{eq:continuous} with $\chi=q$.
\end{remark}


\includepdf[pages=-]{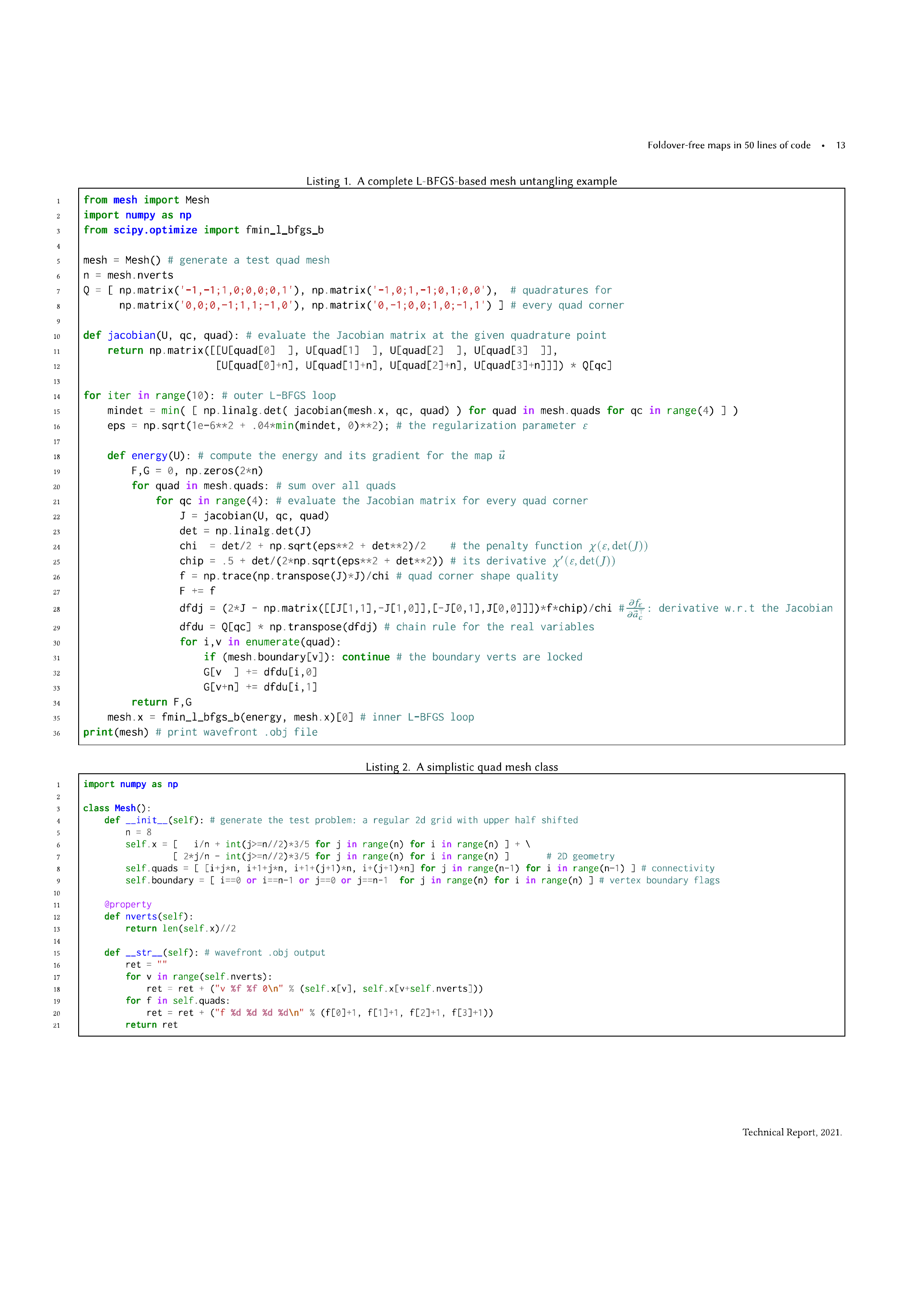}

\end{document}